\definecolor {processblue}{cmyk}{0.96,0,0,0}
\newtheorem{theorem}{Theorem}
\newtheorem{definition}{Definition}
\newtheorem{proposition}{Proposition}
\newtheorem{lemma}{Lemma}
\newtheorem{claim}{Claim}
\newenvironment{talign}
 {\align}
 {\endalign}
\newenvironment{talign*}
 {\csname align*\endcsname}
 {\endalign}
\title{Fair and Efficient Online Allocations with Normalized Valuations}
\author[a]{Vasilis Gkatzelis} \author[b]{Alexandros Psomas} \author[a]{Xizhi Tan}
\affil[a]{Drexel University, Computer Science}
\affil[b]{Purdue University, Computer Science}
\begin{document}
\date{}
\maketitle

\begin{abstract}
A set of divisible resources becomes available over a sequence of rounds and needs to be allocated immediately and irrevocably. Our goal is to distribute these resources to maximize fairness and efficiency. Achieving any non-trivial guarantees in an adversarial setting is impossible. However, we show that normalizing the agent values, a very common assumption in fair division, allows us to escape this impossibility. Our main result is an online algorithm for the case of two agents that ensures the outcome is envy-free  while guaranteeing $91.6\%$ of the optimal social welfare. We also show that this is near-optimal: there is no envy-free algorithm that guarantees more than $93.3\%$ of the optimal social welfare.
\end{abstract}

\section{Introduction}\label{sec:intro}
We consider a basic problem in online fair division: a set of divisible items become available over a sequence of $T$ rounds (one item per round), and in each round we need to make an irrevocable decision regarding how to distribute the corresponding item among a set $N$ of $n$ agents. The value $v_{it}$ of each agent $i$ for the item in round $t$ is revealed at the beginning of that round and our goal is to ensure that the overall allocation at the end of the $T$ rounds is fair and efficient, despite the information limitations that we face.

Prior work on online resource allocation problems such as the one above has mostly focused on maximizing efficiency. In our setting, this could easily be achieved by fully allocating the item of each round $t$ to the agent $i$ with the largest $v_{it}$ value. However, this approach can often lead to outcomes that are patently unfair, which is unacceptable in many important real-world applications. For example, ensuring that the outcome is fair is crucial for food banks that allocate food each day to soup kitchens and other local charities depending on the demand~\cite{Canice17}, or software engineering companies that distribute shared computational resources among their employees \cite{GBJG20}.

Achieving fairness in such an online setting can be significantly more complicated than just maximizing efficiency. This is mostly due to the fact that reaching a fair outcome may require a more holistic view of the instance at hand. For example, the \emph{fair-share} property (also referred to as proportionality in some contexts), one of the classic notions of fairness, requires that each of the $n$ agents should eventually receive at least a $1/n$ fraction of their total value for all the $T$ items. But, agents who only value highly demanded items are harder to satisfy than agents who value items of low demand, and online algorithms may be unable to distinguish between these two types of agents soon enough. As a result, designing efficient online algorithms that also satisfy the fair-share property is an important, yet non-trivial, task.

In fact, it is easy to show that without imposing any normalization on the agent values, essentially the only algorithm that guarantees the fair-share property is the naive one that equally splits every item among all agents (see Appendix~\ref{app:missing from intro} for a proof). This yields an outcome that is inefficient, unless all agents happen to have the same values. But, the standard approach in fair division is to normalize the agents' values so that they add up to the same constant (that constant is usually $1$). As we show in this paper, this normalization is sufficient for us to escape the strong impossibility result and achieve non-trivial efficiency guarantees while satisfying the fair-share property.  

\subsection{Our results and techniques}
With the exception of a few results in Section~\ref{sec:many-agents}, all of our results focus on instances involving two agents, which already pose several non-trivial obstacles.

We first consider the performance of non-adaptive online algorithms, i.e., algorithms whose allocation decision in each round $t$ depends only on the agents' values for item $t$. A major benefit of these algorithms is that they need not keep track of any additional information, making them easy to implement. We focus on the interesting family of \emph{poly-proportional} algorithms that are parameterized by a value $p\geq 0$, and in each round $t$ allocate to each agent $i$ a fraction of the item equal to $\frac{v^p_{it}}{\sum_{j \in N} v^p_{jt}}$. For $p=0$, we recover the algorithm that splits each item equally among the agents (which satisfies fair-share but can be inefficient), while for $p=\infty$ we get the algorithm that allocates each item to the agent with the highest value (which is efficient but violates fair-share). Another well-studied algorithm from this family, that is used widely in practice, is the \emph{proportional allocation} (or just proportional) algorithm, which corresponds to the case $p=1$.  We show that this algorithm satisfies fair-share and is a significant improvement in terms of efficiency: it guarantees $82.8\%$ of the optimal social welfare (Theorem~\ref{thm: proportional algo}). 

As the value of the parameter $p$ grows, the corresponding poly-proportional algorithm allocates each item more ``agressively'', i.e., a larger fraction goes to the agents with the highest values. As a result, higher values of $p$ lead to increased efficiency, but may also lead to the violation of the fair-share property. We precisely quantify this intuition by first showing that for all $p > 2$ the corresponding poly-proportional algorithm does not satisfy fair-share (Lemma~\ref{lem: poly-prop not envy free}). Then, we show that the poly-proportional algorithm with parameter $p=2$, the \emph{quadratic-proportional} algorithm, satisfies fair-share and guarantees $89.4\%$ of the optimal social welfare (Theorem~\ref{thm: qp}). As a result, we conclude that $89.4\%$ is the optimal approximation achievable by a poly-proportional 
algorithm that satisfies fair-share.

Moving beyond non-adaptive algorithms, we proceed to study the extent to which adaptivity could lead to even better approximation guarantees. With that goal in mind, we propose the family of \emph{guarded poly-proportional} algorithms, which are a slight modification of the poly-proportional algorithm, also parameterized by $p$. We show that every algorithm in this family satisfies fair-share, and our main result is that the guarded poly-proportional algorithm with $p=2.7$ guarantees $91.6\%$ of the optimal social welfare (Theorem~\ref{thm: guarded poly approx}). On the other hand, we prove that no fair-share algorithm (adaptive or non-adaptive) can achieve an approximation to the optimal welfare better than $93.3\%$ (Theorem~\ref{thm : lower bound}), thus establishing that our positive result is near optimal.

To prove our results, we leverage the fact that our algorithms have a closed form expression for the agents' allocations and utilities. We can use this fact and write a mathematical program that computes the worst-case approximation to the optimal welfare over all instances. We use variables $v_t$ for the value of agent $1$ for item $t$ and $\lambda_t$ for the ratio between agents' values. Even though this program is not itself convex (so at first glance it's unclear how useful it is), we show that under a suitable choice of variables and constraints, fixing some of the variables (i.e. treating them as constants) gives a linear program with respect to the remaining variables. The majority of the constraints in this LP are non-negativity constraints, so, using the fundamental theorem of linear programming we conclude that the worst-case instance only has a few (two or three depending on the algorithm) items with positive valuations. Once we have such small instances we can analyze the approximation using simple calculus. See the proofs of Theorem~\ref{thm: proportional algo},~\ref{thm: qp} and~\ref{thm: guarded poly approx} for details.

We conclude with a brief discussion regarding instances with $n\geq 3$ agents. We already know from the work of \cite{CKKK12} on the \emph{price of fairness}  that even offline algorithms cannot achieve an approximation better than $\Omega(1/\sqrt{n})$; we complement this result by showing that the non-adaptive proportional algorithm matches this bound. Finally, we provide an interesting local characterization of all online algorithms that satisfy the fair-share property.

\section{Related Work}
The same model that we consider in this setting, i.e., online allocation of divisible items with normalized agent valuations, was very recently studied by~\citet{GBJG20}. But, rather than introducing fairness as a hard constraint, like we do here, they (approximately) maximize the Nash social welfare objective. On the other hand, \citet{BMS19} maximize efficiency subject to fair-share constraints, like we do, but not in an adversarial setting. The agent values are stochastically generated and fairness is guaranteed only in expectation. 

An additional motivation behind our assumption that the agents' values are normalized comes from systems where the users are asked to express their value using a budget of some artificial currency in the form of tokens. If a user has a high value for a good then she can use more tokens to convey this information to the algorithm. Since all users have the same budget, their values are normalized by design. A natural, and very well-studied algorithm in these systems is the \emph{proportional algorithm}, which distributes each item in proportion to the expressed value (see, e.g., \cite{Zhang05,FLZ09,CST16,BGM17}). We provide an analysis of this algorithm, but we also achieve improved results using alternative algorithms.

\citet{ZP20} considered the trade-off between fairness and efficiency under a variety of adversaries, but in a setting with indivisible items and non-normalized valuations. Against the strong adversary studied here, their results are negative: no algorithm with non-trivial fairness guarantees can Pareto-dominate a uniformly random allocation.

More broadly, our paper is part of the growing literature on online, or dynamic, fair division. Much of this prior work analyzes settings where the agents are static and the resources arrive over time, like we do~\cite{walsh2011online,benade2018make,he2019achieving}. 
Another line of work studies the allocation of static resources among dynamically arriving and departing agents \cite{KPS14,friedman2015dynamic,friedman2017controlled,im2020dynamic}.

\section{Preliminaries}

We consider the problem of allocating $T$ divisible items among a set $N$ of $n$ agents.  A fractional allocation $\mathbf{x}$ defines for each agent $i\in N$ and item $t$ the fraction $x_{it}$ of that item that the agent will receive. A feasible allocation satisfies $\sum_{i \in N} x_{it} \leq 1$ for all items $t$. 

We assume the valuations of the agents are \emph{additive}: each agent $i$ has valuation $v_{it}$ for item $t$, and utility $u_i(\mathbf{x}) = \sum_{t \in [T]} v_{it} x_{it}$ for an allocation $\mathbf{x}$. We also assume that the agents' valuations are normalized so that $\sum_{t \in [T]} v_{it} = 1$. We evaluate the efficiency of an allocation $\mathbf{x}$ using the \emph{social welfare} (SW), i.e., the sum of all agents' utilities $SW(\mathbf{x}) = \sum_{i \in N} u_i(\mathbf{x})$.

An allocation $\mathbf{x}$ satisfies fair-share if $u_i(\mathbf{x}) \geq \frac{1}{n}$ for every agent $i \in N$. We say that an algorithm satisfies fair-share if it always outputs an allocation that satisfies fair-share. Another popular definition of fairness is envy-freeness, which dictates that no agent $i$ values the allocation of some other agent $j$ more than her own. It is well known that if every item $t$ is fully allocated, i.e., $\sum_{i\in N}x_{it}=1$, then envy-freeness implies fair-share, and for two-agent instances (which is the main focus of this paper) the two notions coincide.

The item valuations are not available to us up-front; instead, the items arrive online (one per round) and the agent values for the item of round $t$ are revealed when the item arrives. The
algorithm then makes an irrevocable decision about how to allocate the item before moving on to the next round. We evaluate our algorithms using worst-case analysis, so one can think of the values being chosen by an adaptive adversary aiming to hurt the algorithm's performance. Throughout the paper our algorithms do not need to know the total number of rounds $T$, but all our negative results apply even to algorithms that have this information.

We say an algorithm is \textit{non-adaptive} if its allocation decision for round $t$ solely depends on the valuations at round $t$, whereas an $\textit{adaptive}$ algorithm can use the valuations and allocations of all the previous rounds. An interesting family of non-adaptive algorithms parametrized by a value $p$ are ones that we call \emph{poly-proportional} algorithms whose allocation in each round $t$ is proportional to $v_{it}^p$, i.e., each agent $i$ is allocated a fraction $x_{it} =  v^p_{it}/\sum_{j \in N} v^p_{jt}$. For $p=0$ this become the equal-split algorithm, for $p=1$ the proportional algorithm, and for $p=\infty$ the greedy one. 

Given some algorithm $\mathcal{A}$, let $\mathbf{x}^{\mathcal{A}}(\mathbf{v})$ denote the overall allocation that it outputs on an instance with agent values $\mathbf{v}$, and let $\mathbf{x}^{\text{OPT}}(\mathbf{v})$ be the social welfare maximizing allocation. $\mathcal{A}$ is an $\alpha$-approximation to the optimal social welfare if 
\[
\min_{\mathbf{v}} \frac{SW(\mathbf{x}^{\mathcal{A}}(\mathbf{v}))}{SW(\mathbf{x}^{\text{OPT}}(\mathbf{v}))} \geq \alpha.
\]
Note that our algorithms are constrained to be online and to always output fair-share outcomes, while the welfare maximizing benchmark is restricted by neither one of the two.

\section{Non-Adaptive Algorithms}\label{sec: non adaptive}

Non-adaptive algorithms have the important benefit that they need not keep track of historical information regarding the agents' allocation or preferences. A naive example of such an algorithm is equal-split, i.e,. the poly-proportional algorithm with $p=0$. Since this algorithm splits every item equally among the two agents, they both always receive value exactly $1/2$, and hence the outcome is fair-share. However, this outcome can be very inefficient, leading to a $50\%$ approximation to the optimal welfare (e.g., consider an instance with $v_{11}=v_{22}=1$ and $v_{12}=v_{21}=0$).

Our first result analyzes the widely-used proportional algorithm ($p=1$) and shows that it guarantees $82.8\%$ of the social welfare. This is already a big improvement compared to $50\%$,
but we then also provide a fair-share algorithm that improves this further, to $89.4\%$. Proofs missing from this section can be found in Appendix~\ref{app: missing from non adaptive}.

\begin{theorem}\label{thm: proportional algo}
The proportional algorithm satisfies fair-share and gives a $0.828$ approximation to the optimal welfare.
\end{theorem}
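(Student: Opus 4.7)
Under the proportional rule, agent $i$ receives fraction $v_{it}/(v_{1t}+v_{2t})$ of item $t$, so her utility is $u_i = \sum_t v_{it}^2/(v_{1t}+v_{2t})$. Using the algebraic identity $v_{it}^2/(v_{1t}+v_{2t}) = v_{it} - v_{1t}v_{2t}/(v_{1t}+v_{2t})$ and the normalization $\sum_t v_{it} = 1$, this rewrites as $u_i = 1 - \sum_t v_{1t}v_{2t}/(v_{1t}+v_{2t})$. The AM-GM inequality $v_{1t}v_{2t}\le (v_{1t}+v_{2t})^2/4$ bounds each summand by $(v_{1t}+v_{2t})/4$, so the total sum is at most $(1+1)/4 = 1/2$, giving $u_i\ge 1/2$ for both agents.

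\textbf{Reducing the approximation bound to a two-item instance.} For the $0.828$ guarantee I would follow the general mathematical-programming strategy outlined in the introduction. Reparametrize the instance by letting $v_t:=v_{1t}$ and $\lambda_t:=v_{2t}/v_{1t}$; then the proportional welfare equals $\sum_t v_t(1+\lambda_t^2)/(1+\lambda_t)$ and the optimum welfare equals $\sum_t v_t \max(1,\lambda_t)$, with the feasibility constraints $\sum_t v_t=1$, $\sum_t \lambda_t v_t=1$, $v_t\ge 0$. The key step is to \emph{fix} the $\lambda_t$'s: once they are treated as constants, both welfares are linear in $v$, so their ratio is a linear-fractional function on the bounded polytope defined by the two equalities and the non-negativity constraints. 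Such a ratio attains its minimum at a vertex, and since only two equalities bind, every vertex has at most two positive coordinates. Consequently the adversary can restrict attention to instances with only two items carrying positive value. I expect this reduction to be the main conceptual obstacle, since the original joint optimization over $(v,\lambda)$ is non-convex and a priori unbounded in dimension.

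\textbf{Calculus on the reduced instance.} In a two-item worst case, feasibility of the two normalization equations forces one of $\lambda_1,\lambda_2$ to lie in $[0,1]$ and the other in $[1,\infty)$. The problem is symmetric under swapping the two agents, an operation that sends $(\lambda_1,\lambda_2)\mapsto(1/\lambda_2,1/\lambda_1)$; either by exploiting this symmetry (whose fixed-point locus is $\lambda_1=1/\lambda_2$) or by doing the two-variable optimization directly, the minimizer satisfies $\lambda_1=1/\lambda$ and $\lambda_2=\lambda$ for some $\lambda\ge 1$. Plugging in yields $v_1=\lambda/(\lambda+1)$ and $v_2=1/(\lambda+1)$, after which the approximation ratio collapses to the single-variable expression $(\lambda^2+1)/(\lambda(\lambda+1))$. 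Setting its derivative to zero gives the quadratic $\lambda^2-2\lambda-1=0$, whose relevant root is $\lambda=1+\sqrt{2}$; the corresponding value is $2\sqrt{2}-2\approx 0.828$, matching the claimed bound.
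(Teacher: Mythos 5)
Your proof is correct, and it diverges from the paper's in three pleasant ways. For fair-share, the paper invokes Milne's inequality to bound agent $1$'s value for agent $2$'s bundle by $1/2$; you instead rewrite $u_i = 1 - \sum_t v_{1t}v_{2t}/(v_{1t}+v_{2t})$ and apply AM--GM termwise. The two bounds coincide under the normalization $\sum_t v_{1t} = \sum_t v_{2t} = 1$ (Milne's bound is generally tighter but equals yours when the row sums agree), so your version is a genuinely more elementary route to the same conclusion. For the reduction to two items, the paper normalizes so that $\mathrm{OPT}=1$ and then minimizes a linear objective, whereas you keep $\sum_t v_t = \sum_t \lambda_t v_t = 1$ and argue that a linear-fractional function over a bounded polytope achieves its minimum at a vertex; both correctly conclude that at most two $v_t$'s are positive once $\lambda$ is fixed.

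The one place you should tighten the argument is the final single-variable reduction. You observe the symmetry $(\lambda_1,\lambda_2)\mapsto(1/\lambda_2,1/\lambda_1)$ (agent-swap composed with item-swap) and assert that the minimizer lies on its fixed locus $\lambda_1\lambda_2=1$. Symmetry of a function alone does not force a minimizer onto the fixed-point set --- one could in principle have a symmetric pair of off-diagonal minimizers --- so this step needs either an explicit check that all interior critical points of $\alpha(\lambda_1,\lambda_2)$ satisfy $\lambda_1\lambda_2=1$, or the direct two-variable critical-point analysis that the paper carries out (finding $v_1=v_2=1/\sqrt 2$, which indeed corresponds to $\lambda_1=1/\lambda_2$). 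Your hedge ``or by doing the two-variable optimization directly'' shows you're aware of this, but as written it is a gap rather than a completed step. Once that is patched, the rest checks out: on the symmetric slice you get $v_1=\lambda/(\lambda+1)$, $v_2=1/(\lambda+1)$, ratio $(\lambda^2+1)/(\lambda(\lambda+1))$, critical point at $\lambda^2-2\lambda-1=0$, i.e.\ $\lambda=1+\sqrt 2$, giving $2(\sqrt 2 - 1)\approx 0.828$, matching the paper's value.
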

\begin{proof}
First we porve the envy-freeness of the proportional algorithm. We will use Milne's inequality~\cite{milne1925note} which states that for all $x_j, y_j > 0$:

\[
\sum_{j=1}^m \frac{x_j y_j}{x_j + y_j} \leq \frac{(\sum_{j=1}^m x_j)(\sum_{j=1}^m y_j)}{\sum_{j=1}^m x_j + \sum_{j=1}^m y_j}.
\]

Plugging in $x_j = v_{1j}$ and $y_j = v_{2j}$, the LHS is exactly the value of agent $1$ for agent $2$'s allocation, while the RHS is equal to $1/2$.

We now proof the efficiency guarantees of the proportional algorithm. Given an instance $\mathbf{v}$, let $v_t = v_{1t}$ and $\lambda_t = \frac{v_{2t}}{v_{1t}}$ for each $t \in [T]$. Let $ALG$ be the welfare of the proportional algorithm.
\[  ALG =\sum_{t \in [T]} \frac{v_t^2+(v_t\lambda_t)^2}{v_t+ v_t\lambda_t} = \sum_{t \in [T]} v_t \frac{1+\lambda_t^2}{1+\lambda_t}. \]
Now, consider the following mathematical program:
\begin{talign}
\text{minimize}&\sum_{t \in [T]} v_t \frac{1+\lambda_t^2}{1+\lambda_t} \notag \\
	\text{subject to} 
	&\sum_{t \in [T]} v_t = \sum_{t \in [T]} v_t \lambda_t  \label{constr: equal values} \\
	&\sum_{t \in [T] :\lambda_t \leq 1} v_t + \sum_{t \in [T]: \lambda_t > 1} \lambda_t v_t = 1 \label{constr: opt is 1}\\
	&v_t,\lambda_t \geq 0,  \text{ for all } t \in [T] \notag
\end{talign}
The objective is to minimize the approximation to welfare we receive from the algorithm. In this program, we don't enforce that the agents' values add up to $1$, but we simply have them be equal to each other (constraint~\ref{constr: equal values}). Instead, we ask that the optimal welfare is equal to $1$ (constraint~\ref{constr: opt is 1}).

First, we argue that solving this program would give us the worst case approximation to welfare.  Consider an arbitrary feasible solution $\mathbf{v} , \mathbf{\lambda}$ to this program; by dividing each agents' values (each $v_{it}$) by their common total value $\sum_{t \in [T]} v_{it}$ we get a feasible instance for the original problem. Furthermore, the approximation to welfare in this instance is equal to the value of the objective: the social welfare of the proportional algorithm and the optimal social welfare are the program's objective and $1$, divided by the normalization term $\sum_{t \in [T]} v_{it}$, respectively. Showing that an arbitrary online instance gives a feasible solution to this program with the approximation to welfare unchanged is equally straightforward.

Second, notice that for any fixed $\mathbf{\lambda}$, the remaining program, with variables only the $v_t$s, is a linear program with $T$ variables. By the fundamental theorem of linear programming, a minimizer occurs at the region's corner, i.e. there is a minimizer with $T$ constraints tight. Since the total number of constraints is $T+2$, and the first two constraints are tight, $T-2$ of the $T$ tight constraints are non-negativity constraints. So the worst case approximation occurs when there are exactly two variables/rounds with positive value for agent 1. Without loss of generality (the proportional algorithm is memoryless) these are the first two items.

Third, for every instance where agent $1$ values only the first two items, the approximation to optimal welfare is minimized when agent $2$ also values only the first two items. 

Now, consider the two rounds instance, in the original notation, where agent $1$ has value $v_1$ for item $1$ and $1-v_1$ for item $2$, while agent $2$ has values $1-v_2$ and $v_2$. 
Without loss of generality $v_1 \geq 1-v_2$, which implies $v_2 \geq 1- v_1$. Therefore, $OPT = SW(\mathbf{x}^{\text{OPT}}(\mathbf{v})) = v_1 + v_2$, and
\[
ALG = \frac{v_1^2+(1-v_2)^2}{v_1+1-v_2}+\frac{(1-v_1)^2+v_2^2}{v_2+1-v_1}.
\]
Then, overloading notation, we have that the approximation to the welfare is
\[
 \alpha(v_1,v_2) = \frac{\frac{v_1^2+(1-v_2)^2}{v_1+1-v_2}+\frac{(1-v_1)^2+v_2^2}{v_2+1-v_1}}{ v_1 + v_2}.
\]
We analyze this function, by taking partial derivatives and analyzing all critical points. We find that the worst approximation to optimal welfare is achieved for $v_1 = v_2 = 1/\sqrt{2}$, and has value
$\alpha\left(\frac{1}{\sqrt{2}},\frac{1}{\sqrt{2}}\right) = 2(\sqrt{2}-1) \approx 0.828$. See Appendix~\ref{app: missing from non adaptive} for the missing details.
\end{proof}

\subsection{Performance of poly-proportional algorithms}

We now study the family of poly-proportional algorithms more broadly. As we mentioned in the introduction, poly-proportional algorithms with higher values of $p$ may lead to increased social welfare, but they also make it increasingly likely that the fair-share property will be violated. We first show that we cannot increase $p$ by too much before losing fair-share: for any $p > 2$ the corresponding poly-proportional algorithm does not satisfy fair-share.

\begin{lemma}\label{lem: poly-prop not envy free}
The poly-proportional algorithm with parameter $p$ does not satisfy fair-share for any $p>2$.
\end{lemma}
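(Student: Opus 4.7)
The plan is to exhibit, for every $p>2$, a two-agent, two-item instance on which the poly-proportional algorithm leaves one agent with total utility strictly less than $1/2$. Symmetric instances $v_1=(a,1-a),\ v_2=(1-a,a)$ will not work: by symmetry both agents receive utility $f(a)=(a^{p+1}+(1-a)^{p+1})/(a^p+(1-a)^p)$, and a Taylor expansion around $a=1/2$ shows $f(a)\approx \tfrac12 + 2p(a-\tfrac12)^2$, so fair-share is always satisfied. I would therefore break symmetry by letting one agent be ``extreme''.

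Concretely, I would use the instance with $v_1=(x,\,1-x)$ and $v_2=(0,1)$ where $x>0$ is a small parameter to be chosen as a function of $p$. Since agent~$2$ assigns zero value to item~$1$, the poly-proportional rule gives the whole of item~$1$ to agent~$1$ for any $p>0$. On item~$2$, agent~$1$'s share is $(1-x)^p/((1-x)^p+1)$. Thus agent~$1$'s total utility has the clean closed form
\[
U_p(x) \;=\; x \;+\; \frac{(1-x)^{p+1}}{(1-x)^p+1},
\]
and in particular $U_p(0)=\tfrac12$, meaning the boundary case already sits exactly at the fair-share threshold.

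The heart of the argument is a first-order analysis of $U_p$ at $x=0$. A direct differentiation yields
\[
U_p'(0) \;=\; 1 \;+\; \frac{-(p+1)\cdot 2 + p}{4} \;=\; \frac{2-p}{4},
\]
which is strictly negative precisely when $p>2$. Hence for any $p>2$ there exists $x_p>0$ small enough that $U_p(x_p)<\tfrac12$. Since the instance has $n=2$, this violates fair-share for agent~$1$, completing the proof.

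The main obstacle is simply identifying an instance that witnesses the failure — the symmetric ``mirror'' instance misleadingly supports fair-share for every $p$, so one must be willing to place all of agent~$2$'s weight on a single item and let agent~$1$ spread a small amount $x$ onto the item that agent~$2$ ignores. After that, the calculation is routine, and the appearance of the factor $(2-p)/4$ in $U_p'(0)$ explains cleanly why $p=2$ is the exact threshold at which fair-share begins to fail, which also motivates the tightness of the companion positive result for the quadratic-proportional algorithm.
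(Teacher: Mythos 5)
Your proof is correct, and it uses the same witness family of instances as the paper, just with the two rounds relabeled: in the paper agent~1 has values $(x,1-x)$ and agent~2 has $(1,0)$, so agent~1's utility is $1-\frac{x}{x^p+1}$, which is exactly your $U_p$ evaluated at $1-x$. Where you differ is in the final step. The paper plugs in the specific choice $x=(1/(p-1))^{1/p}$ and applies the elementary inequality $a^b>\frac{a}{a+b}$ (for $a>0$, $b\in(0,1)$) to bound agent~1's utility strictly below $1-\frac{p-1}{2p-1}<\frac12$ for $p>2$. You instead observe that $U_p(0)=\frac12$ and differentiate, obtaining $U_p'(0)=\frac{2-p}{4}<0$ precisely when $p>2$, so a small positive $x$ pushes the utility below $\frac12$. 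Your derivative-based version is arguably cleaner and makes the $p=2$ threshold visible in a single formula, at the small cost of producing only an ``exists small enough $x$'' rather than a concrete violating value (though one could be extracted via a remainder bound). Both arguments are sound and establish the same threshold.
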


\begin{proof}
Consider the following two item instance. The first round has values $x$ and $1$ for agents $1$ and $2$, respectively, while the second round has values $1-x$ and $0$. Agent $1$ has utility $\frac{x \cdot x^p}{1 + x^p} + 1-x = 1 - \frac{x}{x^p + 1}$.
For $x = (\frac{1}{p-1})^{1/p}$, agent $1$ gets utility $1 - \frac{p-1}{p}(\frac{1}{p-1})^{1/p}$. For all $a > 0$, $b \in (0,1)$, we have that $a^b > \frac{a}{a+b}$, thus the utility  of agent $1$ is
$1 - \frac{p-1}{p}\left(\frac{1}{p-1}\right)^{\frac{1}{p}} < 1 - \frac{p-1}{p} \frac{\frac{1}{p-1}}{\frac{1}{p-1} + \frac{1}{p}}=1 - \frac{p-1}{2p-1}$. This expression is less than $1/2$ for all $p>2$.
\end{proof}

Our main result in this section is for the poly-proportional algorithm with parameter $p=2$: we call this the \emph{quadratic-proportional} algorithm. We show that this algorithm satisfies fair-share and achieves a $0.894$ approximation to the optimal welfare, a significant improvement over the proportional algorithm.  By Lemma~\ref{lem: poly-prop not envy free}, the quadratic-proportional algorithm guarantees the optimal social welfare within the class of fair-share poly-proportional algorithms.

\begin{theorem}\label{thm: qp}
The quadratic-proportional algorithm satisfies fair-share and achieves a $0.894$ approximation to the optimal social welfare.
\end{theorem}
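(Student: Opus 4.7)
The plan is to handle the two assertions separately, using a clean algebraic trick for fair-share and then reusing the mathematical-program reduction from Theorem~\ref{thm: proportional algo} for the welfare bound.

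For fair-share, I rely on the identity
$$\frac{v_{1t}^3}{v_{1t}^2+v_{2t}^2} \;=\; v_{1t} \;-\; \frac{v_{1t}\,v_{2t}^2}{v_{1t}^2+v_{2t}^2}.$$
Summing over $t$ and using $\sum_t v_{1t}=1$, agent $1$'s utility equals $U_1 = 1 - \sum_t \tfrac{v_{1t}\,v_{2t}^2}{v_{1t}^2+v_{2t}^2}$. By AM--GM, $v_{1t}^2+v_{2t}^2 \ge 2 v_{1t}v_{2t}$, so each summand is at most $v_{2t}/2$, and the sum is at most $\tfrac{1}{2}\sum_t v_{2t} = \tfrac{1}{2}$. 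Hence $U_1 \ge 1/2$, and by symmetry $U_2 \ge 1/2$.

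For the welfare guarantee, I mirror the program-based reduction from the proof of Theorem~\ref{thm: proportional algo}. With $v_t = v_{1t}$ and $\lambda_t = v_{2t}/v_{1t}$, the algorithm's welfare is $\sum_t v_t (1+\lambda_t^3)/(1+\lambda_t^2)$. Embedding the approximation ratio into the analogous mathematical program yields, for fixed $\boldsymbol\lambda$, a linear program in the $v_t$, with two equality constraints and $T$ non-negativity constraints. The fundamental theorem of linear programming then forces a worst case supported on at most two items, reducing everything to the canonical two-item instance where agent $1$ has values $(v_1, 1-v_1)$, agent $2$ has values $(1-v_2, v_2)$, and $v_1+v_2 \ge 1$, giving
$$\alpha(v_1,v_2) = \frac{1}{v_1+v_2}\!\left(\frac{v_1^3+(1-v_2)^3}{v_1^2+(1-v_2)^2}+\frac{(1-v_1)^3+v_2^3}{(1-v_1)^2+v_2^2}\right).$$

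Finally, I minimize $\alpha(v_1,v_2)$ by calculus. Because the function is invariant under swapping $v_1 \leftrightarrow v_2$, a natural candidate is a critical point on the diagonal $v_1 = v_2 = v$, where the ratio simplifies to $\alpha(v) = (3v^2-3v+1)/(v(2v^2-2v+1))$. Setting the derivative to zero gives $6v^4-12v^3+9v^2-4v+1 = 0$; the relevant root in $[1/2,1]$ is numerically $v \approx 0.6257$, yielding $\alpha \approx 0.894$. The main obstacle is justifying that this diagonal minimum is the global minimum of the two-variable function rather than merely a saddle point. I handle it the same way as in Theorem~\ref{thm: proportional algo}: compute both partial derivatives, enumerate the off-diagonal critical points and the boundary regimes ($v_1=1$, and $v_1+v_2=1$ where the two agents coincide and $\alpha=1$), and check that every such candidate yields ratio at least $0.894$. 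This is a finite, routine calculation deferred to the appendix, structurally identical to the one carried out for the proportional algorithm.
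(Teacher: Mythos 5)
Your fair-share argument is a genuinely simpler route than the paper's. The paper's Lemma~\ref{lem: qp is envy free} establishes fair-share via a structural merge/split reduction (Claims~\ref{claim:split} and~\ref{claim:merge}) that progressively transforms an arbitrary instance into a two-item worst case and then analyzes that instance directly. You bypass all of this: writing agent $1$'s per-item utility as $v_{1t}-v_{1t}v_{2t}^2/(v_{1t}^2+v_{2t}^2)$ and applying AM--GM to the denominator gives $U_1\ge 1-\tfrac12\sum_t v_{2t}=\tfrac12$ in a few lines, with no case analysis and no reduction to small instances. This is cleaner, though it is just as tied to $p=2$ as the paper's argument: AM--GM is exactly what makes the degree-two denominator tractable, so the technique does not obviously tell you anything about other poly-proportional algorithms.

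The welfare argument is essentially the paper's. You reuse the LP reduction (this is Claim~\ref{claim: poly prop worst case} in the paper), arrive at the same two-variable ratio $\alpha(v_1,v_2)$, and defer the same finite enumeration of off-diagonal critical points and boundary regimes that the paper carries out in Appendix~\ref{app: qp efficiency }. Two small remarks. First, your quartic factors as $6v^4-12v^3+9v^2-4v+1=(v-1)(6v^3-6v^2+3v-1)$, and the relevant root is near $v\approx 0.6265$, not $0.6257$; this does not change the reported ratio, which is $\approx 0.894$ either way. Second, be aware that the deferred globality check is where the real work in the paper's appendix lives: it finds the off-diagonal critical point $(0.35526,0.985127)$ with ratio $\approx 0.923$ and confirms that $(0.6265,0.6265)$ is the global minimizer, so while ``routine,'' it is the step you cannot actually skip.
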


Theorem~\ref{thm: qp} follows from Lemmas~\ref{lem: qp is envy free} and~\ref{lem: qp efficiency}.

\begin{lemma}\label{lem: qp is envy free}
The quadratic-proportional algorithm satisfies fair-share.
\end{lemma}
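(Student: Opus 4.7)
The plan is to use the paper's observation that for two agents with fully allocated items, envy-freeness coincides with fair-share. So I would reduce Lemma~\ref{lem: qp is envy free} to showing that, under the quadratic-proportional rule $x_{it}=v_{it}^2/(v_{1t}^2+v_{2t}^2)$, agent~$1$ weakly prefers her own bundle to agent~$2$'s (by symmetry the same works for agent~$2$). Concretely, I want to prove
\[
\sum_{t} v_{1t}\cdot\frac{v_{1t}^2}{v_{1t}^2+v_{2t}^2}\;\geq\;\sum_{t} v_{1t}\cdot\frac{v_{2t}^2}{v_{1t}^2+v_{2t}^2}.
\]

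The key algebraic step I would use is the identity
\[
\frac{v_{1t}(v_{1t}^2-v_{2t}^2)}{v_{1t}^2+v_{2t}^2}
=v_{1t}-\frac{2\,v_{1t}v_{2t}^2}{v_{1t}^2+v_{2t}^2},
\]
obtained by writing $v_{1t}^3-v_{1t}v_{2t}^2=v_{1t}(v_{1t}^2+v_{2t}^2)-2v_{1t}v_{2t}^2$. Then the AM--GM style bound $2v_{1t}v_{2t}\leq v_{1t}^2+v_{2t}^2$ gives
\[
\frac{2\,v_{1t}v_{2t}^2}{v_{1t}^2+v_{2t}^2}
=v_{2t}\cdot\frac{2v_{1t}v_{2t}}{v_{1t}^2+v_{2t}^2}\;\leq\;v_{2t}.
\]
Summing over $t$ and invoking the normalization $\sum_t v_{1t}=\sum_t v_{2t}=1$, the difference in utilities telescopes to
\[
\sum_t v_{1t}-\sum_t\frac{2\,v_{1t}v_{2t}^2}{v_{1t}^2+v_{2t}^2}\;\geq\;1-1=0,
\]
which establishes envy-freeness for agent $1$; the identical argument with the roles swapped handles agent $2$.

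I do not expect any real obstacle: once the rewriting and the $2ab\leq a^2+b^2$ bound are spotted, the normalization does all the remaining work, so the proof should be two or three lines. The only subtlety worth flagging is the edge case $v_{1t}=v_{2t}=0$ (the allocation fraction is undefined); I would handle this by convention, splitting such items evenly, which preserves both the normalization identity and the termwise bound trivially. Finally, I would conclude by citing the preliminaries: envy-freeness between the two agents, combined with full allocation of every item, is equivalent to each agent receiving utility at least $1/2$, which is exactly the fair-share property.
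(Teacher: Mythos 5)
Your proof is correct, and it takes a genuinely different and cleaner route than the paper's. The paper proves this lemma by a structural reduction: two ``surgery'' operations (merging items with $v_{1t}/v_{2t}\leq 1$ and splitting items with $v_{1t}/v_{2t}>1$) are each shown to weakly decrease agent~$1$'s utility, and iterating them collapses any instance to a canonical two-item worst case that is then minimized by hand. Your argument instead establishes, termwise, the inequality
\[
\frac{v_{1t}\bigl(v_{1t}^2-v_{2t}^2\bigr)}{v_{1t}^2+v_{2t}^2}\;=\;v_{1t}-\frac{2v_{1t}v_{2t}^2}{v_{1t}^2+v_{2t}^2}\;\geq\;v_{1t}-v_{2t},
\]
using only $2ab\leq a^2+b^2$, and then sums over $t$ and invokes the normalization $\sum_t v_{1t}=\sum_t v_{2t}=1$. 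Combined with the standard observation (stated in the preliminaries) that under full allocation envy-freeness and fair-share coincide for two agents, this gives a two-line proof. Your approach is morally a Milne-type inequality for the exponent $p=2$, parallel to how the paper handles $p=1$ in Theorem~\ref{thm: proportional algo}, and it is more elementary and shorter than the merge/split machinery. It also localizes exactly where $p=2$ is the threshold: the termwise bound you use is equivalent to $2r\leq r^p+1$ for $r=v_{1t}/v_{2t}$, which holds for all $r>0$ precisely when $p\leq 2$. The edge case $v_{1t}=v_{2t}=0$ is handled fine by the even-split convention you note; alternatively one can just drop such items, since they contribute zero to every sum. Both approaches are sound; the paper's merge/split technique is heavier but is reused structurally elsewhere, while yours buys a much shorter self-contained proof of this particular lemma.
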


\begin{proof}
It suffices to show that agent $1$ gets utility at least $1/2$ in all instances: if this holds, then the same holds for agent $2$, by symmetry. Given any instance, we first show that merging and splitting certain items(rounds) results in a new instance where agent $1$ is worse off.

Merging a set $S$ of items with values $(v_{1t},v_{2t})$ creates a new item with value $(\sum_{t \in S} v_{1t} , \sum_{t \in S} v_{2t} )$. A split operation on an item with values $(v_1,v_2)$, $v_1 \geq v_2$, creates two items, with values $(v_2,v_2)$ and $(v_1-v_2,0)$. 

\begin{claim}\label{claim:split}
Let $\mathbf{v}$ be any instance, and let $\mathbf{v}'$ be the instance where we split all items $t \in [T]$ such that $\frac{v_{1t}}{v_{2t}} > 1$, with $v_{2t} > 0$. Then the utility of agent $1$, in the quadratic-proportional algorithm, in instance $\mathbf{v}'$ is at most her utility in instance $\mathbf{v}$. 
\end{claim}

\begin{proof}
It suffices to show that the utility of agent $1$ weakly decreases after splitting a single item with values $v_1 = x, v_2 =y$, such that $\frac{x}{y} \geq 1$. Let $u$ be the utility of agent $1$ (for this item) before splitting and $u^*$ the utility after splitting. We have that $u = \frac{x^{p+1}}{x^p+y^p}$ and $u^* = \frac{y^{p+1}}{2y^p} + x - y = x - \frac{y}{2}$.
\[
u - u^* = \frac{y^{p+1} - 2xy^p + y x^p}{2x^p + 2y^p}.
\]

It suffices to show that this is non-negative for all $x \geq y$. Since $2x^p + 2y^p \geq 0$, we only need to show that $y^{p+1} - 2xy^p + y x^p \geq 0$. Dividing both sides by $y^{p+1}$, we have $1  - 2 \frac{x}{y} + (\frac{x}{y})^p \geq 0$. For $p=2$, the LHS is equal to $(x/y - 1)^2$ which is non-negative. Note that we used the fact that $x > y$ to ensure that splitting was a valid operation. 
\end{proof}

\begin{claim}\label{claim:merge}
Let $\mathbf{v}$ be any instance, and let $\mathbf{v}'$ be the instance where we take two arbitrary items of $\mathbf{v}$ that satisfy $\frac{v_{1t}}{v_{2t}} \leq 1$ and merge them. Then the utility of agent $1$, in the quadratic-proportional algorithm, in instance $\mathbf{v}'$ is at most her utility in instance $\mathbf{v}$. 
\end{claim}

\begin{proof}
Let $a$ and $b$ be the two items we want to merge, with corresponding values $v_{1a}, v_{2a}, v_{1b}$ and $v_{2b}$. We show that
\[
\frac{v_{1a}^3}{v_{1a}^2+v_{2a}^2} + \frac{v_{1b}^3}{v_{1b}^2 + v_{2b}^2} \geq \frac{(v_{1a}+v_{1b})^3}{(v_{1a}+v_{1b})^2 + (v_{2a}+v_{2b})^2}.
\]

We can simplify this expression to:
\begin{align*}
(v_{2b} v_{1a} - v_{2a} v_{1b})^2 & \left( v_{2b}^2 v_{1a} + 2 v_{2b} v_{2a} (v_{1a} + v_{1b}) \right. \\
& \left.  + v_{1b} (v_{2a}^2 - v_{1a} (v_{1a} + v_{1b})) \right) \geq 0.
\end{align*}

If $v_{2b} v_{1a} - v_{2a} v_{1b} = 0$ we are done. Assume that this is not the case. It suffices to show that 
\[
v_{2b}^2 v_{1a} + 2 v_{2b} v_{2a} (v_{1a} + v_{1b}) + v_{1b} (v_{2a}^2 - v_{1a}^2 - v_{1a}v_{1b}) \geq 0.
\]

First, we are going to drop the second term of the sum. Second, since $\frac{v_{1a}}{v_{2a}} \leq 1$, we have that $v_{2a}^2 \geq v_{1a}^2$, and the third term is lower bounded by $v_{1a} v_{1b}^2$. It thus remains to show that $v_{2b}^2 v_{1a} - v_{1a} v_{1b}^2 \geq 0$, which holds since $\frac{v_{1b}}{v_{2b}} \leq 1$.
\end{proof}

We repeatedly apply Claims~\ref{claim:split} and \ref{claim:merge}, until no splitting or merging is possible, to get a worst case instance for agent $1$. This instance will have multiple items with zero value for agent $2$ that we can simply combine into a single item. Since splitting is no longer possible, there are no items $t \in [T]$ with $v_{2t} > 0$ and $\frac{v_{1t}}{v_{2t}} > 1$. Since merging is not possible there is at most one item $t$ with $\frac{v_{1t}}{v_{2t}} \leq 1$. Therefore, we have an instance with two items, one with both positive values (that we cannot merge) and one with zero value for agent $2$. Let $v$ be the value of agent $1$ for item $1$, and $1-v$ her value for item $2$. Agent $2$'s values are $1$ and $0$.

Agent $1$ has utility $\frac{v^3}{v^2 + 1} + 1-v = 1 - \frac{v}{v^2 + 1}$. It is easy to confirm that this function is minimized for $v=1$ where it takes the value $1/2$.
\end{proof}

\begin{lemma}\label{lem: qp efficiency}
The quadratic-proportional algorithm 
achieves a $0.894$ approximation to the optimal social welfare.
\end{lemma}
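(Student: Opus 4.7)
The plan is to mirror the proof strategy of Theorem~\ref{thm: proportional algo}. Writing $v_t = v_{1t}$ and $\lambda_t = v_{2t}/v_{1t}$ for each round $t$, the welfare of the quadratic-proportional algorithm is
\[
ALG = \sum_{t \in [T]} \frac{v_{1t}^3 + v_{2t}^3}{v_{1t}^2 + v_{2t}^2} = \sum_{t \in [T]} v_t \cdot \frac{1 + \lambda_t^3}{1 + \lambda_t^2}.
\]
I would set up the analogous mathematical program to the one used for the proportional algorithm, minimizing this expression subject to the normalization constraint $\sum_t v_t = \sum_t v_t \lambda_t$ and the constraint $\sum_{t:\lambda_t \leq 1} v_t + \sum_{t:\lambda_t > 1} v_t \lambda_t = 1$ forcing $OPT = 1$, together with $v_t, \lambda_t \geq 0$. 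For any fixed choice of $\mathbf{\lambda}$, the program reduces to a linear program in the $v_t$'s with $T+2$ non-trivial constraints, so by the fundamental theorem of linear programming a minimizer has at most two positive $v_t$'s. Exactly as in the proof of Theorem~\ref{thm: proportional algo}, I would then argue that the minimum is achieved when agent $2$ also values only those same two items.

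With the reduction in hand, the remaining step is to analyze the worst-case two-round instance: agent $1$ holds values $(v_1, 1-v_1)$ and agent $2$ holds values $(1-v_2, v_2)$ with $v_1 + v_2 \geq 1$ (so that $OPT = v_1 + v_2$). Then
\[
ALG = \frac{v_1^3 + (1-v_2)^3}{v_1^2 + (1-v_2)^2} + \frac{(1-v_1)^3 + v_2^3}{(1-v_1)^2 + v_2^2},
\]
and the approximation ratio is $\alpha(v_1, v_2) = ALG/(v_1 + v_2)$. The function $\alpha$ is symmetric under the exchange $(v_1, v_2) \mapsto (v_2, v_1)$, which I would exploit to argue that an interior minimizer lies on the diagonal $v_1 = v_2 = v$. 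Substituting reduces the problem to the single-variable minimization of $\alpha(v) = (v^3 + (1-v)^3)/\bigl(v(v^2 + (1-v)^2)\bigr)$ over $v \in [1/2, 1]$, whose critical point $\alpha'(v) = 0$ is the unique root of a quartic in $[1/2,1]$, numerically $v^\star \approx 0.626$, giving $\alpha(v^\star) \approx 0.894$. The boundary cases, where either $v_1 + v_2 = 1$ or one of the agents' values is concentrated on a single item, should be handled separately and will easily be seen to give at least $1$ or $1/2 + 1/2 = 1$ depending on the geometry, so they do not improve on $v^\star$.

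The step I anticipate being most delicate is the two-variable critical-point analysis that justifies restricting to the diagonal. Although the symmetry $v_1 \leftrightarrow v_2$ makes the diagonal a natural candidate, ruling out asymmetric interior critical points in a rational function of this complexity may require either a direct (but tedious) computation of $\partial \alpha/\partial v_1 = \partial \alpha/\partial v_2 = 0$ and elimination of the non-diagonal solutions, or a convexity/second-derivative argument along the anti-diagonal direction $v_1 - v_2$ at each point of the symmetry locus. The boundary check and the final univariate calculus are then routine, and the tight constant $2(\sqrt{2}-1)$-type expression that appeared for $p=1$ is replaced here by the numerically approximated value $\approx 0.894$ realized at the symmetric critical point.
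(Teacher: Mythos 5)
Your proposal follows the same high-level strategy as the paper: reduce to two-item instances via the LP/fundamental-theorem argument (this is Claim~\ref{claim: poly prop worst case} in the paper, which the paper factors out since it holds for all $p$), then analyze the two-variable ratio $\alpha(v_1,v_2)$ by calculus. Your reduction, the resulting formula for $\alpha$, and the diagonal computation $\alpha(v) = \tfrac{3v^2-3v+1}{v(2v^2-2v+1)}$ with quartic critical-point equation and root $v^\star \approx 0.626$ giving $\approx 0.894$ are all correct and match the paper.

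The one genuine gap is exactly the step you flagged as delicate, and your proposed shortcut does not work as stated. Symmetry of $\alpha$ under $v_1 \leftrightarrow v_2$ only tells you that critical points come in conjugate pairs $(a,b)$ and $(b,a)$; it does \emph{not} imply that an interior minimizer lies on the diagonal. In fact the paper's appendix computation finds an asymmetric interior critical point at approximately $(0.355,\,0.985)$ (together with its mirror image). It turns out to give $\alpha \approx 0.923 > 0.894$, so the diagonal point is still the global minimizer, but this must be verified — it cannot be ruled out by symmetry alone. The paper handles this by writing $\partial\alpha/\partial v_1$ and $\partial\alpha/\partial v_2$ as $f(v_1,v_2)$ and $f(v_2,v_1)$ over a common positive denominator, numerically solving $f(v_1,v_2)=f(v_2,v_1)=0$ to enumerate all critical points $\{(0,1),\,(0.6265,0.6265),\,(0.355,0.985),\,(1,1)\}$, and evaluating $\alpha$ at each. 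Your alternative suggestion of a convexity/second-derivative argument along the anti-diagonal would also need to cope with the fact that the function is not convex globally (there is a genuine off-diagonal critical point), so the direct enumeration is the safer route. Aside from this, your boundary discussion is slightly too optimistic — e.g.\ along $v_1=1$ the ratio dips to about $0.933$, not $1$ — but all boundary values exceed $0.894$, so the conclusion is unaffected.
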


We start by showing that two item instances are the worst case. This is, in fact, true for all algorithms in the poly-proportional family.

\begin{claim}\label{claim: poly prop worst case}
For any $p$, the worst-case instance (in terms of approximation) for the poly-proportional algorithm with parameter $p$ has at most two items.
\end{claim}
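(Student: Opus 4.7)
The plan is to mirror the mathematical-programming argument used in the proof of Theorem~\ref{thm: proportional algo}, observing that the same trick works for arbitrary $p$ because fixing the ratio variables $\lambda_t$ leaves a linear program in the $v_t$ variables regardless of $p$.

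First, I will rewrite the welfare of the poly-proportional algorithm in the variables $v_t := v_{1t}$ and $\lambda_t := v_{2t}/v_{1t}$. A direct calculation gives
\[
ALG \;=\; \sum_{t\in[T]} \frac{v_{1t}^{p+1}+v_{2t}^{p+1}}{v_{1t}^p+v_{2t}^p}
\;=\; \sum_{t\in[T]} v_t \cdot \frac{1+\lambda_t^{p+1}}{1+\lambda_t^p}.
\]
Next I will set up the analog of the program in Theorem~\ref{thm: proportional algo}: minimize $\sum_t v_t \frac{1+\lambda_t^{p+1}}{1+\lambda_t^p}$ subject to (i) $\sum_t v_t = \sum_t v_t\lambda_t$ (both agents have equal total value), (ii) $\sum_{t:\lambda_t\le 1} v_t + \sum_{t:\lambda_t>1} v_t\lambda_t = 1$ (the optimum welfare is $1$), and (iii) $v_t,\lambda_t\ge 0$. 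The same rescaling argument used for Theorem~\ref{thm: proportional algo} shows that feasible solutions of this program are in approximation-preserving correspondence with online instances, so its optimum equals the worst-case approximation ratio of the poly-proportional algorithm with parameter $p$.

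Now comes the key step, which is identical in spirit to the one for $p=1$: fix the vector $\boldsymbol\lambda$ and view the program as an optimization over the $T$ variables $v_t$ alone. The objective is linear in the $v_t$, constraints (i) and (ii) are linear in the $v_t$, and the only remaining constraints are the $T$ non-negativity constraints $v_t\ge 0$; thus the restricted problem is a linear program with $T$ variables and $T+2$ constraints. By the fundamental theorem of linear programming, there is a minimizer at a vertex, where $T$ constraints are tight. Since constraints (i) and (ii) are tight by feasibility, at least $T-2$ of the non-negativity constraints are tight, so at most two of the $v_t$ are strictly positive. Hence the worst-case instance has at most two items with nonzero value for agent $1$; if some item had zero value for agent $1$ but nonzero value for agent $2$, the same argument applied symmetrically (or using the fact that such items only hurt $ALG$ weakly and can be merged) reduces to a two-item instance.

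I do not expect a real obstacle here: the $p$-dependence sits entirely inside the objective coefficients $\frac{1+\lambda_t^{p+1}}{1+\lambda_t^p}$, which become \emph{constants} once $\boldsymbol\lambda$ is fixed, so the LP structure is preserved verbatim. The only small care needed is to note that the argument uses nothing about $p$ beyond nonnegativity of these coefficients (which holds since $\lambda_t \ge 0$), and so the claim holds for every $p$.
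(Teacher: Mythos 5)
Your proposal mirrors the paper's own proof step for step: the same change of variables $v_t = v_{1t}$, $\lambda_t = v_{2t}/v_{1t}$, the same mathematical program with the equal-values and $\mathrm{OPT}=1$ constraints, the same observation that fixing $\boldsymbol\lambda$ yields an LP in the $v_t$, and the same appeal to the fundamental theorem of linear programming to conclude at most two $v_t$ are positive. The only cosmetic difference is that you write the objective coefficient in the general form $\frac{1+\lambda_t^{p+1}}{1+\lambda_t^p}$ whereas the appendix specializes to $p=2$; your brief remark that items with $v_{1t}=0$, $v_{2t}>0$ can be dispensed with corresponds to the paper's third step that the worst case has agent $2$ valuing only those same two items.
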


\begin{proof}
Similarly to Theorem~\ref{thm: proportional algo} one can write a mathematical program with variables $v_t$ and $\lambda_t$ that computes the worst case approximation to welfare, and then observe that for every fixed choice of $\mathbf{\lambda}$ the remaining program is in fact linear. Applying the fundamental theorem of linear programming we conclude that at most two $v_t$ variables are non-zero. We defer the details to Appendix~\ref{app: qp efficiency }.
\end{proof}

\begin{proof}[Proof of Lemma~\ref{lem: qp efficiency}]
Given Claim~\ref{claim: poly prop worst case} we only need to consider two item instances. Let $v_1$ and $1-v_2$ be the agents' values for item $1$, and $1-v_1$ and $v_2$ their values for item $2$.

Without loss of generality, assume that $v_1 > 1-v_2$ (and therefore $v_2 > 1-v_1$). The optimal welfare becomes $OPT = v_1 + v_2$. Consider the performance of our algorithm:
\[\textstyle
ALG = \frac{v_1^3 + (1-v_2)^3}{v_1^2 + (1-v_2)^2} + \frac{(1-v_1)^3+v_2^3}{(1-v_1)^2+v_2^2}.
\]
The approximation to welfare is
\[
\alpha(v_1,v_2) = \frac{ALG}{OPT} = \frac{\frac{(1-v_1)^3+v_2^3}{(1-v_1)^2+v_2^2} + \frac{(1-v_2)^3+v_1^3}{(1-v_2)^2+v_1^2}}{v_1 + v_2}
\]

\noindent In the remainder of the proof we take partial derivatives with respect to $v_1$ and $v_2$ and analyze the critical points, using numerical solvers for part of the proof. The worst extreme point is $(0.6265, 0.6265)$, which gives $\alpha(0.6265, 0.6265) > 0.894$. See Appendix~\ref{app: qp efficiency } for details.
\end{proof}

\section{Adaptive Algorithms}\label{sec: adaptive}
Moving beyond non-adaptive algorithms, in this section we consider the benefits of being adaptive. In deciding how to allocate the item of each round $t$, adaptive algorithms can take into consideration, e.g., the utility of each agent so far, or what portion of their total value is yet to be realized. But, what would be a useful way to leverage this information in order to achieve improved approximation guarantees?

We propose a natural way to modify the family of poly-proportional mechanisms studied in the previous section. Specifically, we use the additional information to ``guard'' against the violation of the fair-share property. To motivate this modification, assume that at the end of some round $c$ during the execution of a poly-proportional with $p>2$ the utility that one of the agents has received so far plus her value for all remaining items is exactly $1/2$, i.e.,

\[
\sum_{t = 1}^{c} v_{it} x_{it} + \sum_{t = c+1}^T v_{it} = \frac{1}{2}.
\]
This would mean that, unless that agent receives all of the remaining items that she has positive value for in full, then she would not receive her fair share. We refer to this as a \emph{critical point} and use it to define the family of \emph{guarded poly-proportional} algorithms parametrized by $p$: while no agent has reached a critical point, the algorithm is identical to the corresponding non-adaptive poly-proportional one; but, if some agent reaches a critical point, then all the remaining items are fully allocated to that agent. It therefore leverages adaptivity in a simple way, by checking for critical points.

Note that a critical point may not necessarily arise only at the beginning or the end of a round. However, it is easy to show that we can assume this is the case without loss of generality. Roughly speaking, if a critical point is reached during the execution of some round $t$ while a fraction $f$ of that item has been allocated, then we can divide that item into two pieces (of size $f$ and $1-f$), creating an instance with $T+1$ items where the critical point is reached at the end of round $t$, and without affecting the outcome of the algorithm. We discuss this in more detail in Appendix~\ref{app: missing from adaptive}.

If some agent reaches a critical point then, clearly, these algorithms ensure that the agent will receive her fair share. But, this does not imply that the other agent will also receive her fair share. For this to be true, the other should have received her fair share before that critical point, because she will receive no more items.

Our next result shows that, in fact, this family of algorithms always satisfies fair-share.

\begin{lemma}\label{lem: guarded is ef}
The guarded poly-proportional algorithm with parameter $p$ satisfies fair-share for all $p\geq 0$.
\end{lemma}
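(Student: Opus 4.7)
The plan is to split on whether any agent ever reaches the critical-point threshold during the run. If none does, the algorithm coincides with the non-adaptive poly-proportional one, and each agent's potential $W_i(c) := \sum_{t \le c} v_{it} x_{it} + \sum_{t > c} v_{it}$, which is non-increasing from $W_i(0) = 1$, must have stayed above $1/2$; in particular $U_i(T) = W_i(T) \ge 1/2$, so fair-share holds. Otherwise, let $i$ (WLOG) be the first agent to reach the threshold at round $\tau$, which by the item-splitting remark we may assume is a round boundary. The guard allocates all items $t > \tau$ fully to $i$, so her final utility is $W_i(\tau) = 1/2$; fair-share for $i$ is automatic. The substantive task---and the main obstacle---is to show $U_j(\tau) \ge 1/2$ for the other agent, since after $\tau$ she receives nothing.

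For the substantive task, I would use two identities at $\tau$: from $i$'s critical condition, $\sum_{t \le \tau} v_{it} x_{jt} = 1 - W_i(\tau) = 1/2$, and from the fact that $j$ has not yet triggered, $\sum_{t \le \tau} v_{jt} x_{it} = 1 - W_j(\tau) \le 1/2$. The goal $U_j(\tau) \ge 1/2$ rewrites (via the first identity) as $\sum_{t \le \tau}(v_{jt} - v_{it}) x_{jt} \ge 0$. A first, elementary pointwise bound says $(v_{jt} - v_{it})$ and $(x_{jt} - 1/2)$ always share a sign under the poly-proportional rule, so $(v_{jt} - v_{it})(x_{jt} - 1/2) \ge 0$ and thus $(v_{jt} - v_{it}) x_{jt} \ge (v_{jt} - v_{it})/2$. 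Summing yields $U_j(\tau) - 1/2 \ge (S_j(\tau) - S_i(\tau))/2$, where $S_i(\tau), S_j(\tau)$ are the partial sums of each agent's values over $t \le \tau$. It thus suffices to establish the claim $S_j(\tau) \ge S_i(\tau)$.

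Proving $S_j(\tau) \ge S_i(\tau)$ at the first critical point is the main technical hurdle. After the substitution $a_t = v_{it}$, $u_t = v_{jt}/v_{it}$, $g(u) = u^p/(1+u^p)$, the claim becomes: whenever $a_t \ge 0$, $\sum_t a_t \le 1$, $\sum_t a_t u_t \le 1$, and $\sum_t a_t g(u_t) = 1/2$, we have $\sum_t a_t(u_t - 1) \ge 0$. My plan is to attack this via the Lagrangian: minimizing $\sum a_t u_t$ subject to the equality constraint forces, by the KKT stationarity $g'(u_t) = \text{const}$ and the unimodality of $g'$, each optimal $u_t$ to lie in a set of size at most two. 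A direct case analysis of such two-value configurations, exploiting the feasibility constraint $\sum a_t g(u_t) = 1/2$ combined with $\sum a_t \le 1$ (which forces the $a_t$-weighted average of $g(u_t)$ to be at least $1/2$, and hence pushes the average of $u_t$ above $1$), gives $\sum a_t(u_t - 1) \ge 0$, with equality only in the degenerate boundary case $\tau = T$, $S_i = S_j = 1$. Combined with the inequality from paragraph two, this yields $U_j(\tau) \ge 1/2$. The residual small-$p$ regime is handled separately: for $p \le 1$, the non-adaptive poly-proportional algorithm already satisfies fair-share (trivially for $p = 0$ via equal splitting, and via Milne's inequality for $p = 1$, as in Theorem~\ref{thm: proportional algo}), so the guard never fires mid-execution and the claim is immediate.
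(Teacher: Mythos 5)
Your set-up is fine: the non-increasing potential $W_i$, the identities $\sum_{t\leq\tau}v_{it}x_{jt}=1/2$ and $1-W_j(\tau)\leq 1/2$, and the rewriting of the target as $\sum_{t\leq\tau}(v_{jt}-v_{it})x_{jt}\geq 0$ are all correct and mirror the opening of the paper's proof. The gap is in the step you flag as the ``main technical hurdle'': the auxiliary claim $S_j(\tau)\geq S_i(\tau)$ is false, so the strategy of reducing to it cannot succeed. For a counterexample with $p=3$, take $T=3$ and values
\begin{align*}
(v_{11},v_{21})&=(0.2,\,0),\\
(v_{12},v_{22})&=(0.7894,\,0.9473),\\
(v_{13},v_{23})&=(0.0106,\,0.0527),
\end{align*}
where $0.7894\approx 1/\bigl(2g(1.2)\bigr)$ with $g(\lambda)=\lambda^p/(1+\lambda^p)$. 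Agent $1$ receives all of item $1$ and a fraction $1/(1+1.2^3)\approx 0.3666$ of item $2$, so $W_1(2)=0.2+0.7894\cdot 0.3666+0.0106=0.5$, and round $2$ is agent $1$'s first critical point; one checks $W_2(1)=1$ and $W_2(2)\approx 0.653$, so agent $2$ has not triggered. Yet $S_1(2)=0.9894 > 0.9473 = S_2(2)$. Fair-share still holds, since $U_2(2)=0.9473\cdot g(1.2)\approx 0.6\geq 1/2$, but your chain cannot see it: the pointwise inequality $(v_{jt}-v_{it})x_{jt}\geq (v_{jt}-v_{it})/2$ discards exactly the quantitative information about $x_{jt}$ that makes the result true (when $v_{jt}>v_{it}$, agent $j$'s share is much larger than $1/2$), and summing it yields $(S_j-S_i)/2\approx -0.021<0$ while the true value of $\sum_{t\leq\tau}(v_{jt}-v_{it})x_{jt}$ is $+0.1$. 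The Jensen-style remark (``weighted average of $g(u_t)$ at least $1/2$ forces the weighted average of $u_t$ above $1$'') also fails: for $p>1$, $g$ is convex on $[0,1]$ and concave on $[1,\infty)$, so the needed direction of Jensen does not hold; in the example the weighted average of $g(u_t)$ is $\approx 0.505$ while that of $u_t$ is $\approx 0.957<1$.

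The paper avoids this lossy reduction by keeping the exact form of $x_{jt}$: it writes the minimization of agent $2$'s utility $\sum_{t\leq c}v_t\lambda_t^{p+1}/(1+\lambda_t^p)$ as a mathematical program with the equality constraint $\sum_{t\leq c}v_t\lambda_t^p/(1+\lambda_t^p)=1/2$ (the critical-point condition) and the inequalities $\sum v_t\leq 1$, $\sum v_t\lambda_t\leq 1$, then freezes the $\lambda_t$'s to obtain a linear program in $v_t$ and applies the fundamental theorem of linear programming to reduce to at most three positive $v_t$'s, followed by a direct case analysis showing the minimum is exactly $1/2$. To salvage your route you would need to carry $\sum_{t\leq\tau}(v_{jt}-v_{it})x_{jt}$ through the optimization rather than pass to partial sums, which essentially leads you back to that program. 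Your side observation that for $p\leq 1$ the non-adaptive poly-proportional rule already satisfies fair-share (so the guard is never active) is correct, but the paper's argument handles all $p\geq 0$ uniformly without needing that split.
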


\begin{proof}
If there is no critical point the statement trivially holds, so assume, without loss of generality, that agent $1$ reaches a critical point at round $c$. By definition, we have that $\sum_{t = 1}^{c} v_{1t} \cdot x_{1t} + \sum_{t = c+1}^T v_{it} = 1/2$. By the normalization assumption, 
$\sum_{t = 1}^{c} v_{1t} \cdot (x_{1t}+x_{2t}) + \sum_{t = c+1}^T v_{it} = 1$. We get
$\sum_{t = 1}^{c} v_{1t} \cdot x_{2t}= \sum_{t = 1}^{c} v_{1t}\cdot \frac{v_{2t}^p}{v_{1t}^p+v_{2t}^p} = \frac{1}{2}$. That is, it remains to show that fair-share is satisfied for agent $2$.

Similarly to the proof of Theorem~\ref{thm: proportional algo} and Lemma ~\ref{lem: qp is envy free} we will write a mathematical program with variables $v_t = v_{1t}$ and $\lambda_t = \frac{v_{2t}}{v_{1t}}$, for all $t \in [t]$. The goal of the program this time will be to find a worst-case instance with respect to agent $2$, given that $c$ is a critical point for agent $1$.

Agent $1$'s utility of resources allocated to agent $2$ can be expressed as $\sum_{t \leq c} v_t \frac{\lambda_t^{p}}{1+ \lambda_t^p}$, while agent $2$ has utility $\sum_{t \leq c} v_t \frac{\lambda_t^{p+1}}{1+ \lambda_t^p}$. Consider the program
 \begin{equation}
	\begin{array}{lcl}
	\text{minimize}&\sum_{t \leq c} v_t \frac{\lambda_t^{p+1}}{1+\lambda_t^p}\\
	\text{subject to}
	&\sum_{t=1}^c v_t \frac{\lambda_t^p}{1+\lambda_t^p} = \frac{1}{2} \\
	&\sum_{t = 1}^c v_t \leq 1 \\
	&\sum_{t = 1}^c v_t \lambda_t \leq 1 \\
	&v_t, \lambda_t \geq 0 \text{ for all } t \in [c]\\
	\end{array}
	\notag
\end{equation}

Notice that given a feasible solution to this program one can always construct a valid online allocation instance, where the guarded poly-proportional algorithm with parameter $p$ will reach critical point $c$ for agent $1$ and agent $2$'s utility is exactly the objective function, and vice versa. Proving the lemma is therefore equivalent to showing that the optimal solution $u_2$ of this program above is at least $\frac{1}{2}$.

Consider any fixed choice for the $\lambda_t$ variables: the remaining program is linear, and therefore, by the fundamental theorem of linear programming we know that there exists an optimal solution with $c$ tight constraints (since there are $c$ variables). The first constraint is already tight, so we have $c-1$ other tight constraints. At least $c-3$ of those are non-negativity constraints, so we have at most 3 positive variables. In the remainder of the proof we consider all the cases; details are deferred to Appendix~\ref{app: missing from adaptive}.
\end{proof}

For non-adaptive algorithms we observed that efficiency increases with $p$ but, unfortunately, the largest value that yields a fair-share algorithm is $p=2$. For the guarded poly-proportional family we can get a fair-share algorithm for all $p$, but how does the efficiency depend on this value? For larger values of $p$, the algorithm is trying to maximize social welfare more aggressively, but this means that it is more likely to reach a critical point, after which it is forced to be inefficient.

Based on a class of instances provided in Appendix~\ref{app: missing from adaptive}, Figure~\ref{fig:cpplot} provides approximation upper bounds quantifying precisely this trade-off: if for each $p$ we restrict our attention to instances where the corresponding poly-proportional algorithm does not reach a critical point, then the performance increases with $p$. But, as $p$ increases, the set of instances with a critical point keeps growing and the greediness of the algorithm gradually hurts its efficiency. 

For each value of $p$ the points in the plot upper bound the algorithm's approximation, so the most promising choice is $p=2.7$, where the two points meet. Our main result is that the guarded poly-proportional with parameter $p=2.7$ achieves a $0.916$ approximation to the optimal social welfare which, as the figure indicates, is essentially optimal within the family of guarded poly-proportional algorithms.

\begin{figure}
    \centering
    \includegraphics[width=11cm]{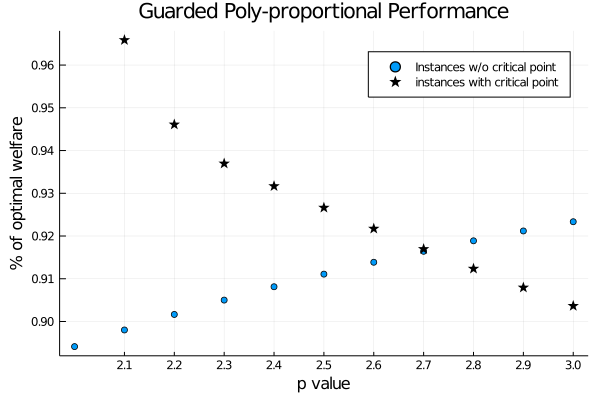}
    \caption{Approximation to the optimal welfare by guarded poly-proportional algorithms for different values of $p$, depending on whether the instance has a critical point or not}
    \label{fig:cpplot}
\end{figure}

\begin{theorem}\label{thm: guarded poly approx}
The guarded poly-proportional algorithm with parameter $p =2.7$ achieves a $0.916$ approximation to the optimal social welfare.
\end{theorem}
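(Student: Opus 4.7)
The plan is to bound the approximation ratio by case analysis on whether a critical point is reached during the execution, mirroring the two curves in Figure~\ref{fig:cpplot}. In each case I would write a mathematical program whose optimum is the worst-case approximation, then exploit the fact that fixing the $\lambda_t := v_{2t}/v_{1t}$ variables makes the program linear in the $v_t := v_{1t}$. The fundamental theorem of linear programming then reduces each case to a small, finite-dimensional optimization that can be dispatched by calculus and numerical checks, and the parameter $p=2.7$ is pinpointed by equating the two resulting bounds.

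In the first case no critical point is reached, so the algorithm is identical to the non-adaptive poly-proportional algorithm with $p=2.7$. By Claim~\ref{claim: poly prop worst case} the worst instance has at most two items, so I would parametrize exactly as in Lemma~\ref{lem: qp efficiency}, with agent~$1$ having values $v_1, 1-v_1$ and agent~$2$ having values $1-v_2, v_2$, under the WLOG assumption $v_1 \geq 1-v_2$. Substituting $p=2.7$ in the resulting closed-form expression for the approximation ratio, taking partial derivatives with respect to $v_1$ and $v_2$, and locating the critical points numerically, one finds that the worst extreme point lies on the symmetry diagonal $v_1=v_2$ and verifies that the ratio there is at least $0.916$.

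In the second case, without loss of generality agent~$1$ reaches a critical point at some round~$c$. Because every item after $c$ is handed entirely to agent~$1$ and the critical-point equation pins agent~$1$'s final utility at exactly $1/2$, the algorithm's welfare takes the clean form
\[
ALG \;=\; \tfrac{1}{2} \;+\; \sum_{t \leq c} v_t\,\frac{\lambda_t^{p+1}}{1+\lambda_t^p},
\]
while the optimum is $OPT = \sum_{t} v_t \max(1,\lambda_t)$. I would set up the program that minimizes $ALG$ subject to $OPT=1$, the normalizations $\sum_t v_t = \sum_t v_t \lambda_t = 1$, the critical-point equation $\sum_{t\leq c} v_t/(1+\lambda_t^p) + \sum_{t>c} v_t = 1/2$, the inequalities asserting that no critical point was triggered strictly before $c$, and non-negativity. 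As in the proof of Lemma~\ref{lem: guarded is ef}, fixing the $\lambda_t$ and the partition into rounds before and after $c$ makes the program linear in the $v_t$, so by the fundamental theorem of linear programming an extreme point has only a constant number of positive $v_t$. Merging items with identical $\lambda_t$ on the same side of $c$ and eliminating duplicate columns reduces Case~2 to a small instance that can again be analyzed by calculus with numerical assistance to verify the $0.916$ bound at $p=2.7$.

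The main obstacle I expect is the extreme-point structure in Case~2: the inequalities encoding that $c$ is the \emph{first} critical point can in principle be tight, producing boundary minimizers not captured by the smooth interior analysis, and some care is needed to argue that these boundary cases do not beat the conjectured bound. The correct choice $p=2.7$ depends sensitively on matching the Case~1 and Case~2 worst-case bounds at the same $p$; Figure~\ref{fig:cpplot} indicates exactly this balance point, so the final numerical step is to confirm that neither case is the binding one by more than the other and that $0.916$ is indeed the common value of the two bounds.
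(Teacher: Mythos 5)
Your plan matches the paper's proof at every essential step: write a mathematical program in $v_t$ and $\lambda_t$, observe that fixing $\boldsymbol{\lambda}$ yields an LP, invoke the fundamental theorem of linear programming to reduce to a bounded number of positive $v_t$, then finish by calculus and numerics at $p=2.7$, with the no--critical-point branch handled exactly as in Claim~\ref{claim: poly prop worst case}. The paper folds both branches into a single program and recovers your two cases through a split on where the positive items sit relative to $c$ (one before / two after; two before / one after; all before, i.e.\ no CP).

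Two small technical points to fix when you actually carry it out. First, your constraint set is over-determined: you simultaneously impose $\sum_t v_t = 1$, $\sum_t v_t\lambda_t = 1$, and $OPT=\sum_t v_t\max(1,\lambda_t)=1$, which forces $\lambda_t=1$ on the support of $v$ and collapses the feasible region to the trivial identical-agents instance. You need only two of these: the paper keeps $\sum_t v_t = \sum_t v_t\lambda_t$ (equal but not normalized to $1$) together with $OPT=1$, and minimizes $ALG$; this gives three equalities, hence at most three positive $v_t$ at an LP vertex. Second, you worry about the inequalities encoding that $c$ is the \emph{first} critical point and the boundary cases they create. You can simply drop them: omitting them only enlarges the feasible set, so the program's minimum can only decrease, and the resulting bound remains a valid (possibly looser) lower bound on the approximation ratio. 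Dropping them also keeps the constraint count at exactly three equalities, preserving the clean ``three positive items'' extreme-point structure that drives the paper's case analysis.
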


\begin{proof}
Let $\alpha$ be the approximation to the optimal welfare of the algorithm. We encode an instance with variables $v_t = v_{1t}$, and $\lambda_t = \frac{v_{2t}}{v_{1t}}$, for all $t \in [T]$. 
Let $c$ be the critical point (if any) and without loss of generality, assume that agent $1$ reaches her critical point.
Agent $2$'s utility $\sum_{t \leq c} v_t \lambda_t \cdot \frac{(v_t \lambda_t)^p}{ v_t^p + (v_t \lambda_t)^p} = \sum_{t \leq c} v_t \frac{\lambda_t^{p+1}}{1+ \lambda_t^p}$. Agent $1$'s utility is $\sum_{t \leq c} v_t \frac{1}{1 + \lambda_t^p} + \sum_{t = c+1}^T v_t$. 
Similarly to Theorem~\ref{thm: proportional algo} and Lemma~\ref{lem: qp efficiency} we write a mathematical program for the optimal approximation ratio:
 \begin{equation}
	\begin{array}{lcl}
	\text{minimize}&\sum_{t = 1}^c v_t \frac{1+\lambda_t^{p+1}}{1+\lambda_t^p} +\sum_{t=c+1}^T v_t \\
	\text{subject to}
	&\sum_{t=1}^T v_t  = 2 ( \sum_{t=1}^c v_t\frac{1}{1+\lambda_t^p} + \sum_{t = c+1}^T v_t ) \\
	&\sum_{t = 1}^T v_t = \sum_{t = 1}^T v_t \lambda_t \\
	&\sum_{t \in [T]:\lambda_t \leq 1} v_t + \sum_{t \in [T]: \lambda_t > 1} \lambda_t v_t =1\\
	&v_t \geq 0, \text{ for all } t \in [T]\\
	&\lambda_t \geq 0, \text{ for all } t \in [T]\\
	\end{array}
	\notag
	\end{equation}

The first constraint encodes the fact that $c$ is a critical point: the LHS is the total value of agent $1$, while the RHS is twice the utility of agent $1$. These should be equal since $c$ is a critical point for agent $1$. The second constraint equalizes the agents' values (instead of normalizing them to $1$), while the third constraint normalizes the optimal welfare to $1$. 
One can go from an arbitrary feasible solution of this program to a valid instance by dividing each $v_{it}$ by $\sum_{t=1}^T v_{t}$, and vice versa, while the approximation to the optimal welfare (which is equal to the welfare when the optimal welfare is $1$) is exactly the objective of this program.

Now observe that for every fixed choice of the $\lambda_t$ variables we get a linear program (with respect to the $v_t$ variables):
 \begin{talign*}
	\text{minimize}&\sum_{t=1}^c v_t a_t+\sum_{t = c+1}^T v_t\\
	\text{subject to}
	&\sum_{t = 1}^T v_t = 2(\sum_{t = 1}^c v_t b_t + \sum_{t=c+1}^T v_t) \\
	&\sum_{t = 1}^T v_t =\sum_{t = 1}^T v_t \lambda_t \\
	&\sum_{t \in [T]:\lambda_t \leq 1} v_t + \sum_{t \in [T]: \lambda_t > 1} \lambda_t v_t =1\\
	&v_t \geq 0, \text{ for all } t \in [T]
\end{talign*}
where $a_t =\frac{1+(\lambda_t)^{p+1}}{1+(\lambda_t)^p}$ and $b_t =\frac{1}{1+(\lambda_t)^p}$.

By the fundamental theorem of linear programming we must have $T$ tight constraints, and we have $T+3$ total constraints (with the first three being tight), so any optimal solution should have exactly $3$ strictly positive $v_t$ variables.

We take cases depending on the value of $c$. Specifically, our three strictly positive $v_t$ variables are either all three after the critical point, two and one, one and two, or all three before the critical point. The first case is, of course, impossible (since the first constraint cannot be satisfied), so we consider each of the other ones.

For each of the cases considered we write a closed form for the approximation to the welfare, as a function of the $\lambda_t$s, we then minimize. For $c=1$ (one item before, two items after the critical point) we get a worst-case approximation of $0.916$. $c=3$, corresponding to no critical points, also gives a worst-case approximation. This corresponds to the intuition from Figure~\ref{fig:cpplot}.
Details can be found in Appendix~\ref{app: missing from adaptive}.
\end{proof}

We complement our positive result by showing that no fair-share adaptive algorithm, even with full knowledge of the number of items $T$, can achieve an approximation to the welfare much better than the guarded poly-proportional family.

\begin{theorem}\label{thm : lower bound}
There is no fair-share algorithm that achieves an approximation to the optimal welfare better than $0.933$.
\end{theorem}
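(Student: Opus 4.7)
My plan is to exhibit a single three-item adversarial instance, parametrized by $\epsilon \in (0,1)$, and show that every fair-share online algorithm achieves approximation at most $(2+\sqrt{3})/4 \approx 0.933$ on it. The instance is: round $1$ has valuations $(v_{11}, v_{21}) = (1, 1-\epsilon)$, and rounds $2$ and $3$ each have valuations $(0, \epsilon/2)$. The valuations sum to $1$ for each agent, and the unconstrained welfare-maximizing allocation gives item $1$ to agent $1$ and items $2$ and $3$ to agent $2$, so $\mathrm{OPT} = 1 + \epsilon$.

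The crucial step is to identify the fair-share constraints on the algorithm's only substantive online decision, namely the fraction $x$ of item $1$ given to agent $1$. After round $1$ agent $1$ has exhausted her total value $1$, so every future item is worth $0$ to her and her final utility equals exactly $x$; fair-share forces $x \ge 1/2$. On the other hand, agent $2$ has only $\epsilon$ value remaining across the remaining rounds, so, no matter what the algorithm does in the remaining rounds, agent $2$'s final utility is at most $(1-x)(1-\epsilon) + \epsilon$. For the algorithm to satisfy fair-share on this instance we therefore need $x \le \frac{1}{2(1-\epsilon)}$. Any fair-share algorithm must then have chosen $x \in \left[\frac{1}{2},\frac{1}{2(1-\epsilon)}\right]$ upon seeing round $1$ alone.

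Because agent $1$ values items $2$ and $3$ at zero, allocating those items entirely to agent $2$ is both welfare-optimal and needed to not tighten agent $2$'s constraint, so the algorithm's welfare equals $x + (1-x)(1-\epsilon) + \epsilon = 1 + x\epsilon$. Maximizing over the allowed range of $x$ yields welfare at most $\frac{2-\epsilon}{2(1-\epsilon)}$, giving an approximation ratio of at most $\frac{2-\epsilon}{2(1-\epsilon^2)}$. A one-variable calculus computation (setting the derivative with respect to $\epsilon$ to zero produces $\epsilon^2 - 4\epsilon + 1 = 0$, whose root in $(0,1)$ is $\epsilon^{*} = 2 - \sqrt{3}$) shows the minimum over $\epsilon$ equals $\frac{2+\sqrt{3}}{4}$, proving the claimed bound.

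The main subtlety is justifying the online upper bound $x \le \frac{1}{2(1-\epsilon)}$: unlike the offline argument, the algorithm must make its round-$1$ choice knowing only $(1, 1-\epsilon)$, not the precise future. The point is that a fair-share algorithm must succeed on \emph{every} continuation consistent with the revealed prefix and the normalization, and in particular on the chosen continuation with items $(0,\epsilon/2)$; any $x > \frac{1}{2(1-\epsilon)}$ would force agent $2$'s maximum achievable utility on that extension strictly below $1/2$, so no fair-share algorithm can play such an $x$ after seeing the round-$1$ values. Once this reduction from online to the one-item decision $x$ is in place, the remainder of the argument is routine calculus.
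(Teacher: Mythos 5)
Your proof is correct, but it takes a genuinely different route from the paper's. The paper constructs \emph{two} instances that agree on round~$1$ and diverge afterwards, and uses the fact that an online algorithm must commit to $x_{11}$ before learning which instance it faces: one continuation punishes a small $x_{11}$, the other punishes a large one, and the two pincers force the bound. Your argument, by contrast, uses a \emph{single} instance and shows that every fair-share allocation of it --- online or offline --- has welfare at most $\frac{2-\epsilon}{2(1-\epsilon)}$. That is a price-of-fairness argument: the constraints $x\ge 1/2$ and $x\le \frac{1}{2(1-\epsilon)}$ are exactly the fair-share feasibility constraints for that fixed instance, so the ``online subtlety'' you flag at the end is actually superfluous --- an offline fair-share allocator faces the identical window for $x$. (Your instance also reduces to two items by merging rounds $2$ and $3$ into a single $(0,\epsilon)$ item, which makes this more transparent.) What each approach buys: yours is simpler, gives a clean closed-form optimizer $\epsilon^*=2-\sqrt 3$ with value $\frac{2+\sqrt 3}{4}$, and shows the obstruction is already present offline; the paper's is marginally tighter precisely because it leverages the online commitment that your argument does not need.

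One small quantitative caveat: $\frac{2+\sqrt3}{4}=0.93301\ldots$, while the paper's Case~1 computation gives roughly $1.0645/1.141\approx 0.9330$, i.e.\ strictly \emph{below} $\frac{2+\sqrt3}{4}$. So the paper's constant is a hair stronger than yours (the extra strength coming from online-ness), and strictly speaking your argument proves ``no better than $0.9331$,'' not ``no better than $0.933$.'' The gap is about $4\times 10^{-5}$ and both numbers round to $0.933$, so this is cosmetic, but if you want to match the paper's stated constant you would need the two-instance online construction; if you want the clean closed form and the extra generality (the bound holds for offline fair-share as well), your version is preferable, but the theorem statement should then read $\frac{2+\sqrt3}{4}$ rather than $0.933$.
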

\begin{proof}
Assume that there exists an online algorithm $\mathcal{A}$ that achieves an approximation better than $0.933$, and consider the following two instances. In the first instance, the agents values are $v_{11} = 0.568$ and $v_{21} = 0.427$ in the first round and $v_{12} = 1 - v_{11} = 0.432$ and $v_{22} = 1-v_{21} = 0.573$ in the second round. In the second instance, the agent values, $\mathbf{v'}$, are again, $v'_{11} = 0.568$ and $v'_{21} = 0.427$ in the first round, but their values in the second round are $v'_{12} = 1-v'_{11} = 0.432$ and $v'_{22} = 0.306$ and agent 2's remaining value of $v'_{23}=1-v'_{21}-v'_{22}=0.267$ is realized in the third round. In what follows, we show that no online algorithm can simultaneously satisfy the fair-share property and guarantee an approximation better than 0.933 in both of these two instances. This argument takes advantage of the fact that prior to the second round, no online algorithm can distinguish between these two instances.

\paragraph{Case 1.}
Assume the algorithm allocates less than $0.6977$ of item 1 to agent 1 in the first round, i.e., $x_{11} < 0.6977$, and consider instance 1. Fair-share for agent 1 implies that
\begin{equation*}
    v_{11}x_{11}+v_{12}x_{12} \geq 1/2 ~~\Rightarrow~~
    x_{12} >0.241. 
\end{equation*}
The algorithm's welfare is therefore
\begin{align*}
    &v_{11}x_{11}+v_{21}x_{21}+v_{12}x_{12}+v_{22}x_{22}\\
    &1+(v_{11}-v_{21})x_{11}+(v_{21}-v_{11})x_{12}<1.064,
\end{align*}
while the optimal welfare is $v_{11}+ v_{22} = 1.141$, so
\[\alpha = \frac{ALG}{OPT} < 0.933.\]

\paragraph{Case 2.}
Now, let $x_{11} \geq 0.6977$ and assume that $x'_{22} < \frac{0.427x_{11}-0.194}{0.306}$ is the amount of the item that algorithm $\mathcal{A}$ would allocate to agent 2 in round 2 if the second instance values were realized. In this case, the fair-share property will be violated for agent 2 because her utility is
\begin{align*}
       u_2 &= v_{21}(1-x_{11})+v'_{22}x'_{22}+ 1-v_{21}-v'_{22} \\
        &< 0.427(1-x_{11})+0.306\frac{0.427x_{11}-0.194}{0.306}+0.267\\
        &< 0.5.
\end{align*}

\paragraph{Case 3.} Finally, if $x_{11} \geq 0.6977$ and $x_{22} \geq \frac{0.427x_{11}-0.194}{0.306}$ and
consider the second instance. The agents' utilities are
\begin{align*}
    u_1 &= v_{11}x_{11}+v'_{12}(1-x'_{22})\\
        & = 0.432 +0.568x_{11}-0.432x'_{22} ~~~\text{, and}\\
    u_2 &= v'_{21}(1-x_{11})+v'_{22}x'_{22}+v'_{23}\\
        &= 0.694+0.306x'_{22}-0.427x_{11}.
\end{align*}
This leads to a social welfare of
\begin{align*}
    u_1 + u_2 &= 1.126 + 0.141x_{11}-0.126x'_{22}\\
        & \leq 1.126+0.141x_{11}-0.126\frac{0.427x_{11}-0.194}{0.306}\\
        &\leq 1.206 -0.035x_{11}\\
        &\leq 1.182,
\end{align*}
while the optimal welfare $v'_{11}+v'_{12}+v'_{23} = 1.267$, so
\[\alpha = \frac{ALG}{OPT} <0.933. \qedhere\]
\end{proof}

\section{Instances Involving Multiple Agents}\label{sec:many-agents}
We now briefly turn to instances with $n\geq 3$. \citet{CKKK12} prove that even if we knew all the values in advance, the \emph{price of fairness}, i.e., the worst-case ratio of the optimal social welfare of a fair-share outcome over the social welfare of the optimal outcome, is $O(1/\sqrt{n})$. Our next result shows that the proportional algorithm matches this bound in an online manner, and therefore achieves the optimal approximation. 

\begin{theorem}\label{thm:prop_mul}
The proportional algorithm guarantees a $\frac{1}{2\sqrt{n}}$, i.e., $\Omega(1/\sqrt{n})$, approximation to the optimal social welfare.
\end{theorem}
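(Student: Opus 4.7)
The plan is to decompose the welfare analysis item-by-item. For each item $t$, write $S_t = \sum_{j \in N} v_{jt}$ for its aggregate value and $M_t = \max_{i \in N} v_{it}$ for the largest individual value. Since items are divisible and welfare-maximizing, $\mathbf{x}^{\text{OPT}}$ simply assigns each item fully to the agent valuing it most, so $OPT = \sum_{t} M_t$. The proportional allocation's contribution from item $t$ is $c_t = \sum_{i \in N} v_{it}^2 / S_t$ (since agent $i$ receives fraction $v_{it}/S_t$ and thus utility $v_{it}^2/S_t$ from item $t$). It therefore suffices to establish a per-item lower bound $c_t \geq M_t/(2\sqrt{n})$ and sum.

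Next I would extract two complementary lower bounds on $c_t$. First, retaining only the term from the agent achieving the maximum gives
\[
c_t \;\geq\; \frac{M_t^2}{S_t},
\]
which is strong when one agent's value on item $t$ dominates. Second, by Cauchy--Schwarz (or equivalently the power-mean inequality), $\sum_i v_{it}^2 \geq S_t^2/n$, so
\[
c_t \;\geq\; \frac{S_t}{n},
\]
which is strong when the values on item $t$ are spread out. These two bounds are naturally ``dual'': the first increases with $M_t/S_t$, the second decreases with it.

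I would then pivot on the threshold $M_t/S_t = 1/\sqrt{n}$, at which the two bounds coincide. If $M_t/S_t \geq 1/\sqrt{n}$, then $c_t \geq M_t^2/S_t = M_t \cdot (M_t/S_t) \geq M_t/\sqrt{n}$. Otherwise, $S_t > M_t \sqrt{n}$, and $c_t \geq S_t/n > M_t/\sqrt{n}$. In either case, $c_t \geq M_t/\sqrt{n}$. Summing over items yields
\[
ALG \;=\; \sum_t c_t \;\geq\; \frac{1}{\sqrt{n}} \sum_t M_t \;=\; \frac{OPT}{\sqrt{n}} \;\geq\; \frac{OPT}{2\sqrt{n}},
\]
which is even stronger than the stated bound.

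There is no real obstacle: the only ``choice'' is to identify the two complementary lower bounds on $c_t$ and the threshold $1/\sqrt{n}$ at which they intersect. The factor-of-two slack in the statement matches the $\Omega(1/\sqrt{n})$ price-of-fairness bound of \citet{CKKK12} up to constants, confirming that the proportional algorithm attains the optimal rate in $n$, and that online access is not a bottleneck relative to what an offline fair-share algorithm could guarantee.
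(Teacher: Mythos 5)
Your proof is correct, and it is tighter than what the paper proves. Both arguments are per-item: you compare the proportional algorithm's contribution $c_t = \sum_i v_{it}^2 / S_t$ to the optimal contribution $M_t = \max_i v_{it}$, and both pivot on the threshold $M_t/S_t = 1/\sqrt{n}$. But the mechanics differ. The paper partitions agents into a ``high'' set $H = \{i : v_{it} \geq M_t/\sqrt{n}\}$ and its complement $L$, then case-splits on whether $H$ or $L$ receives the majority of the item; the factor $1/2$ in the theorem statement comes precisely from this ``majority'' split, plus the crude bound $|L| \leq n$. You instead extract two direct lower bounds on $c_t$ --- the single-term bound $c_t \geq M_t^2/S_t$ and the Cauchy--Schwarz bound $c_t \geq S_t/n$ --- and observe they cross at $M_t/S_t = 1/\sqrt{n}$, where both equal $M_t/\sqrt{n}$. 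This avoids the agent-partitioning and the majority argument entirely, is shorter, and yields $ALG \geq OPT/\sqrt{n}$, strictly better than the paper's $OPT/(2\sqrt{n})$. The paper's formulation is perhaps more ``combinatorial'' in flavor and generalizes more transparently to settings where one reasons about which agents receive the item, but for this specific theorem your route is both cleaner and quantitatively sharper.
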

\begin{proof}
Consider any round $t$ and let $v_{\max}=\max_{i\in N} v_{it}$ be the highest value in this round, and $i^*\in  \arg\max_{i\in N} v_{it}$ be an agent with this value. Let $H$ be the set of agents with $v_{it}\geq v_{\max}/\sqrt{n}$ and $L$ be the set of all the remaining agents. If the portion of the item that the proportional algorithm allocates to the agents in $H$ is at least half of all the item, then the social welfare in this round is at least $v_{\max}/(2\sqrt{n})$. 

On the other hand, if the agents in $L$ are allocated more than half of the item, this means that $\sum_{i\in L} v_{it}/\sum_{i\in N} v_{it}>1/2$. But, $v_{\max}>\sqrt{n} v_{it}$ for all $i\in L$ and thus
\[
v_{\max}> \frac{\sqrt{n}}{|L|}\sum_{i\in L}v_{it} \Rightarrow \frac{v_{\max}}{\sum_{i\in N} v_{it}}> \frac{1}{\sqrt{n}}\cdot\frac{\sum_{i\in L}v_{it}}{\sum_{i\in N} v_{it}},
\]
which implies that $\frac{v_{\max}}{\sum_{i\in N} v_{it}}> \frac{1}{2\sqrt{n}}$,
so the allocation to agent $i^*$ is at least $1/(2\sqrt{n})$, and thus in this case as well the social welfare is at least $v_{\max}/(2\sqrt{n})$.

Since the optimal welfare in $t$ is $v_{\max}$ and the proportional algorithm guarantees a welfare of at least $v_{\max}/(2\sqrt{n})$, summing over all rounds concludes the proof.
\end{proof}

The next result shows that even if we were to restrict the benchmark to be the optimal social welfare subject to the fair-share constraint, still, no online algorithm could achieve an approximation better than $\Omega(1/\sqrt{n})$.
Therefore the proportional algorithm is also optimal with respect to the \emph{competitive ratio} measure, which quantifies the worst case loss of welfare due to the online aspect of the problem alone.

\begin{theorem}\label{thm:onlineefchar}
No online fair-share algorithm can achieve a $\frac{3\sqrt{n}}{n+\sqrt{n}-1}$ approximation to the optimal offline fair-share algorithm. That is, the best feasible approximation is $O(\frac{1}{\sqrt{n}})$.
\end{theorem}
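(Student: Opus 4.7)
My plan is to exhibit an adversarial instance where the offline fair-share optimum is $\Omega(\sqrt{n})$ while any online fair-share algorithm is constrained to welfare $O(1)$, yielding a ratio of $O(1/\sqrt{n})$; the exact constant $\frac{3\sqrt{n}}{n+\sqrt{n}-1}$ should then emerge from careful parameter optimization.

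First, I would set up an instance with roughly $\sqrt{n}+1$ rounds. The first item is designed so that its values are (near-)uniform across all $n$ agents, so that by the symmetry of this item, combined with the fair-share constraint and the uncertainty about the future, any online algorithm is essentially forced to allocate item 1 nearly uniformly---giving approximately $1/n$ of it to each agent. This round-1 commitment consumes most of item 1's welfare potential on fair-share ``insurance'' rather than efficient allocation, and it cannot be undone in later rounds.

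Second, I would design the subsequent rounds so that the adversary reveals some $\sqrt{n}$ of the agents to be ``heavy'' (each with their remaining $1-v$ value concentrated on a single upcoming item, one item per heavy group), with the other $n-\sqrt{n}$ agents being ``light'' (with their remaining value spread across individual later items). Crucially, which agents are heavy is chosen adaptively based on the algorithm's round-1 allocation, so that no algorithm can hedge against both possibilities. With full knowledge of the instance, the offline algorithm can instead concentrate item 1 on the ``light'' agents (who most need its value for fair-share), and allocate each subsequent item entirely to its corresponding ``heavy'' group, achieving total welfare $\Theta(\sqrt{n})$---comparable to the offline price-of-fairness bound of \cite{CKKK12}.

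The main obstacle will be verifying that the construction simultaneously respects the normalization constraint $\sum_t v_{it}=1$ for every agent and provably forces every online algorithm---even an adaptive one---into a near-uniform split at round 1, since adaptive algorithms have many possible response strategies. The exact constants ``$3$'' in the numerator and ``$-1$'' in the denominator of the bound should emerge from the last step, where one carefully optimizes the value $v$ used in round 1 and tightens the welfare calculation for the offline fair-share optimum, analogous to the numerical case analysis used in the proof of Theorem~\ref{thm : lower bound}.
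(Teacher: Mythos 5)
Your high-level intuition (adversarial instance with heavy versus light agents, $\Theta(\sqrt{n})$ offline versus $O(1)$ online) is in the right spirit, but the construction you sketch is missing the two ingredients that actually make the argument close, and one of your claimed steps appears to be false.

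First, the competition must happen \emph{within} the heavy rounds. In the paper's instance, the first $\sqrt{n}$ rounds are simultaneously (i) the only rounds where the $\sqrt{n}$ heavy agents have any large value, and (ii) rounds where \emph{every} light agent has value $\frac{n-1}{n\sqrt{n}}$. It is precisely this overlap that forces the online algorithm to hand a $1/\sqrt{n}$ fraction of these resources to each of the $n-\sqrt{n}$ light agents (as a doomsday hedge), leaving only one unit of ``heavy'' resource for the heavy agents and capping the online welfare by a constant. In your sketch, the light agents' remaining value lives on ``individual later items'' disjoint from the heavy items; with that structure there is no competition, so nothing prevents an online algorithm from giving the heavy items entirely to the heavy agents and covering the light agents from their own dedicated items --- the lower bound disappears.

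Second, your claim that a single near-uniform item in round $1$ ``forces'' a near-uniform split is not justified and is in fact not true as a standalone statement. If $v_{i1}=1/n$ for all $i$ and round $2$ could carry all remaining value $\frac{n-1}{n}$ for each agent, the doomsday-feasibility constraint reduces to $\sum_i x_{i2}\le 1$ with $x_{i2}\ge \frac{1-x_{i1}}{n-1}$; since $\sum_i(1-x_{i1})=n-1$, \emph{any} round-1 split is compatible with some feasible round-2 split. Uniformity is not forced by symmetry plus fair-share alone. What actually constrains the algorithm in the paper is not the first round but the state after round $\sqrt{n}$: at that point each agent has remaining value exactly $\frac{1}{n}$, and if all of that remaining value could materialize in a single next round, every agent must already have utility $\approx\frac{n-1}{n^2}$, which (given light agents' per-round value $\frac{n-1}{n\sqrt{n}}$) forces them to have received $1/\sqrt{n}$ of the early resource pool.

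Third, the $n$ trailing tiny rounds (each with $v_{ii}=1/n$ and zeros elsewhere) are essential and absent from your plan. These are what allow the \emph{offline} fair-share benchmark to satisfy every agent cheaply and then hand the $\sqrt{n}$ big items entirely to the heavy agents, attaining welfare $\frac{n-1}{\sqrt{n}}+1$. Your proposed offline strategy --- ``concentrate item 1 on the light agents'' --- cannot supply fair share to $n-\sqrt{n}$ light agents from a single item each values at only $1/n$; it gives at most one of them her fair share.

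Finally, the adaptive choice of which agents are heavy is unnecessary; the paper uses a single fixed instance. The constants $3$ and $n+\sqrt{n}-1$ do not come from tuning a parameter $v$: the $3$ is the ceiling $1+1+1$ (heavy contribution $\le1$, light contribution $\le1$, last $n$ rounds $\le1$) on the online welfare, and $n+\sqrt{n}-1=\sqrt{n}\bigl(\frac{n-1}{\sqrt{n}}+1\bigr)$ is the offline welfare. As written, your proposal has a genuine gap and would not compile into a proof without essentially redesigning the instance.
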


\begin{proof}
Consider an instance with $n$ agents and $\sqrt{n}+n$ rounds. 
In the first $\sqrt{n}$ rounds, for the first $\sqrt{n}$ agent, we have $v_{ii} = \frac{n-1}{n}$, and $v_{it} = 0$, $t \neq i$. For the remaining $n-\sqrt{n}$ agents, we have $v_{jt} = \frac{n-1}{n\cdot\sqrt{n}}$, for all $j > \sqrt{n}$. Then, in the last $n$ rounds, we have $v_{ii} = \frac{1}{n}$, and $v_{it} = 0$ elsewhere, for all $i \in N$. 

In the offline problem, each agent gets $\frac{1}{n}$ from the last $n$ rounds. Therefore the optimal offline fair-share welfare is

\[OPT = \sqrt{n} \cdot \frac{n-1}{n} + n \cdot \frac{1}{n} = \frac{n-1}{\sqrt{n}}+1.\]

We now focus our attention on round $\sqrt{n}$. Note that each agent has remaining value $1/n$ at this round. An online fair-share algorithm needs to plan for the event that the remaining values are all realized in the next round, $\sqrt{n}+1$. In order to satisfy fair-share in this scenario, each agent must have utility at least $\frac{n-1}{n}\frac{1}{n} = \frac{n-1}{n^2}$ at the end of round $\sqrt{n}$.

Consider an agent $i$ with $i > \sqrt{n}$. Since her value for all the items before round $\sqrt{n}$ is $v_{it} = \frac{n-1}{n\sqrt{n}}$, to give this agent utility at least $\frac{n-1}{n^2}$ her total allocation
must be $\sum_{t=1}^{\sqrt{n}} x_{it} \geq \frac{n-1}{n^2}\frac{n\sqrt{n}}{n-1}= \frac{1}{\sqrt{n}}$. This is true for all $i > \sqrt{n}$, so there is $\sqrt{n} - (n-\sqrt{n})\frac{1}{\sqrt{n}} = 1$ of the resources, in the first $\sqrt{n}$, to be allocated among the first $\sqrt{n}$ agents. No matter how this is split, the contribution to the welfare is the same. Let $U^t$ be the social welfare at the end of round $t$.  We have

\[
U^{\sqrt{n}}= 1\cdot\frac{n-1}{n}+(n-\sqrt{n})\frac{n-1}{n^2} = 2-\frac{2\sqrt{n}+n+1}{n\sqrt{n}}.
\]
For the last $n$ rounds our algorithm can make an optimal choice: $ALG = U^{\sqrt{n}}+n\cdot\frac{1}{n} = 3-\frac{2\sqrt{n}+n+1}{n\sqrt{n}}<3$. Therefore, we have $\alpha = \frac{ALG}{OPT}<\frac{3}{\frac{n-1}{\sqrt{n}}+1} = \frac{3\sqrt{n}}{n+\sqrt{n}-1}$.

\end{proof}

\subsection{Characterization of fair-share algorithms}
Our final result provides an interesting characterization of fair-share algorithms that could enable the design of novel algorithms in this setting. This characterization uses a very simple condition, which we refer to as doomsday compatibility, and we show that this myopic condition is necessary, but also sufficient, for guaranteeing that the final outcome will satisfy fair-share.

\begin{definition}[Doomsday Compatibility]\label{def:DC}
We say an allocation $\mathbf{x}^{t} = \{x_{it}\}_{i \in N}$ at day $t$ is \emph{doomsday compatible} if there exists some allocation $\mathbf{x}^{t+1}$ that would make the overall outcome satisfy fair-share, if $t+1$ was the last round, i.e., if all the agents' remaining value was realized in round $t+1$.
\end{definition}

\begin{proposition}
An online algorithm satisfies the fair-share property if and only if its allocation in every round $t$ is doomsday compatible.
\end{proposition}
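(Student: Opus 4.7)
The plan is to prove the two directions of the biconditional separately. The key observation driving both is that at the end of round $T$ the remaining value of every agent is zero, since $\sum_{s=1}^{T} v_{is} = 1$ by the normalization assumption. Consequently, doomsday compatibility after the final round collapses exactly to the fair-share condition, which handles sufficiency almost immediately. Necessity will follow from the adversarial nature of the online model: since the algorithm does not know $T$ in advance, the adversary can always concentrate each agent's remaining value into a single subsequent ``doomsday'' round.

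For the necessity direction ($\Rightarrow$), I would argue the contrapositive. Suppose there is some input prefix $v_{i1}, \ldots, v_{it}$ (for all $i \in N$) after which the cumulative allocation produced by the algorithm $\mathcal{A}$ is not doomsday compatible. Extend this prefix by one more round, setting $v_{i,t+1} = 1 - \sum_{s=1}^{t} v_{is}$ for every agent $i$, and take $T = t+1$. Because $\mathcal{A}$ is online and cannot distinguish this instance from any other sharing the same prefix, it produces the same cumulative state through round $t$. Whatever it then chooses in round $t+1$ fails to raise all agents' final utilities to $1/n$, by the very definition of ``not doomsday compatible,'' so fair-share is violated on this instance.

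For the sufficiency direction ($\Leftarrow$), suppose $\mathcal{A}$ maintains doomsday compatibility at every round on every input. Fix an instance with $T$ rounds and let $u_i = \sum_{s=1}^{T} v_{is} x_{is}$ denote agent $i$'s final utility. Applying the definition at $t = T$, the remaining value is $r_i := 1 - \sum_{s=1}^{T} v_{is} = 0$, so the existence of a hypothetical $\mathbf{x}^{T+1}$ witnessing doomsday compatibility reduces to the requirement $u_i + r_i \cdot x_{i,T+1} = u_i \geq 1/n$ for every $i$, which is precisely fair-share.

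The only thing to be careful about is the interpretation of the notation $\mathbf{x}^t$: the condition is really a property of the entire history $(\mathbf{x}^1, \ldots, \mathbf{x}^t)$ reached at the end of round $t$, not of the single per-round vector in isolation, so doomsday compatibility should be read as a property of the algorithm's cumulative state. Beyond this small piece of notational care, there is no serious obstacle; the proof is essentially a direct unpacking of the definition combined with the normalization $\sum_{s=1}^{T} v_{is} = 1$.
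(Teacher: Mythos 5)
Your proof is correct and follows essentially the same approach as the paper's: for sufficiency, apply doomsday compatibility at round $T$ where the remaining value is zero, collapsing the condition directly to fair-share; for necessity, argue the contrapositive by concentrating all remaining value into a single doomsday round $t+1$, which the online algorithm cannot anticipate. Your write-up is slightly cleaner than the paper's (which unnecessarily also invokes $t=T-1$ in the sufficiency direction), and your closing remark that doomsday compatibility is really a property of the cumulative history $(\mathbf{x}^1,\ldots,\mathbf{x}^t)$ rather than of the single round-$t$ vector is a fair and useful clarification of the paper's notation.
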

\begin{proof}
First, it is easy to show that doomsday compatibility in every round $t$ is sufficient for an online algorithm to satisfy fair-share. If this condition is satisfied for all $t$, then it is also satisfied for $t=T-1$ and $t=T$, and thus the final outcome is guaranteed to satisfy fair-share.

Now, we show that this condition is also necessary for the algorithm to satisfy fair-share. Assume that there exists a round $t$ such that the online algorithm's allocation in this round is not doomsday compatible. Then, clearly this algorithm would not be fair-share for the instance where $t+1$ is indeed the last round, i.e., where all of the agents' remaining value is realized in round $t+1$.
\end{proof}

\begin{theorem}
If an algorithm is doomsday compatible in some round $t<T$, then there always exists an allocation $\mathbf{x}^{t+1}$ such that it is also doomsday compatible in round $t+1$.
\end{theorem}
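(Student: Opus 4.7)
The plan is to show that a single feasible allocation can serve as both the actual assignment of item $t+1$ and the doomsday witness for the state after round $t+1$; in fact, the doomsday witness that certifies round-$t$ compatibility already does both jobs. First I would unpack the notation: writing $u_i^t = \sum_{s \leq t} v_{is} x_{is}$ for agent $i$'s utility after round $t$ and $r_i^t = 1 - \sum_{s \leq t} v_{is}$ for her remaining value (non-negative by normalization, since future values must cover at least $r_i^t$), doomsday compatibility at round $t$ translates to the existence of a vector $y = (y_i)_{i \in N} \in \mathbb{R}_{\geq 0}^{N}$ with $\sum_{i \in N} y_i \leq 1$ and $u_i^t + r_i^t y_i \geq 1/n$ for every agent $i$.

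Next I would propose the construction $x_{i,t+1} := y_i$ for all $i \in N$, that is, allocate item $t+1$ in exactly the proportions prescribed by the round-$t$ doomsday witness; feasibility is immediate since $\sum_{i} y_i \leq 1$. To certify doomsday compatibility after round $t+1$, I would reuse the very same $y$ as the new witness. The verification reduces to a one-line telescoping: the updated utility is $u_i^{t+1} = u_i^t + v_{i,t+1} y_i$ and the updated remaining value is $r_i^{t+1} = r_i^t - v_{i,t+1}$, so
\[
u_i^{t+1} + r_i^{t+1} y_i \;=\; u_i^t + v_{i,t+1} y_i + (r_i^t - v_{i,t+1}) y_i \;=\; u_i^t + r_i^t y_i \;\geq\; \tfrac{1}{n},
\]
which is exactly the doomsday inequality demanded at round $t+1$.

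I do not foresee a serious obstacle here. The only conceptual point is that the round-$t$ witness $y$ is agnostic to how the adversary splits $r_i^t$ between round $t+1$ and the hypothetical doomsday round that follows: committing to $\mathbf{x}^{t+1} = y$ absorbs exactly the part $v_{i,t+1}$ actually revealed, leaving a residual $r_i^{t+1}$ that the same $y$ can handle in the new doomsday scenario. This self-consistency is what makes the argument essentially immediate, and it works for any adversarial choice of $v_{i,t+1}$, which is precisely the ``always exists'' required by the statement.
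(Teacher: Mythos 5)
Your proof is correct and takes essentially the same approach as the paper: both take the round-$t$ doomsday witness (you call it $y$, the paper calls it $\mathbf{\tilde{x}}$), use it as the actual allocation for round $t+1$, reuse it as the new doomsday witness, and observe that the sum $u_i^{t+1} + r_i^{t+1}y_i$ telescopes back to $u_i^t + r_i^t y_i \geq 1/n$. The only cosmetic difference is that you make the scalar bookkeeping ($u_i^t$, $r_i^t$) explicit, which is a bit cleaner than the paper's notation but not a different argument.
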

\begin{proof}
Consider any round $t$ where the algorithm's allocation is doomsday compatible. This means that there exists some allocation $\mathbf{\tilde{x}}$ that would achieve fair-share if $t+1$ was the last round. In order to show that we can always maintain doomsday compatibility in round $t+1$, it suffices to show that there always exists some allocation $\mathbf{x}^{t+1}$ for that round and an allocation $\mathbf{x}^{t+2}$ for the next round such that the algorithm would satisfy fair-share if $t+2$ were the last round. We show that, in fact, using $\mathbf{\tilde{x}}$ for both rounds $t+1$ and $t+2$ would satisfy this condition. 

To verify this fact, let $\bar{\mathbf{v}}_{it}$ be the remaining value for each agent $i$ after round $t$, and let $u_{i}$ be the total utility each agent received up to round $t$. Since $\mathbf{\tilde x}$ would make the outcome fair-share if $t+1$ was the last round, for any agent $i$ we have $u_i +\bar{\mathbf{v}}_{it}\mathbf{\tilde{x}} \geq \frac{1}{n}$.
Now, if on the other hand $t+2$ was the last round, let $\mathbf{x}^{t+1} = \mathbf{\tilde{x}}$ and $\mathbf{x}^{t+2} = \mathbf{\tilde{x}}$. Then, for any agent $i$ we would have
\begin{align*}
    &u_i + v_{i(t+1)}\mathbf{x}^{t+1} + (\bar{\mathbf{v}}_{it}-v_{i(t+1)})\mathbf{x}^{t+2}\\
    =& u_i + v_{i(t+1)}\mathbf{\tilde{x}}+(\bar{\mathbf{v}}_{it}-v_{i(t+1)})\mathbf{\tilde{x}}\\
    =&u_i + \bar{\mathbf{v}}_{it}\mathbf{\tilde{x}}\geq \frac{1}{n}.
\end{align*}
Therefore, for $\mathbf{x}^{t+1} = \mathbf{\tilde{x}}$, there exists a $\mathbf{x}^{t+2} = \mathbf{\tilde{x}}$ such that the algorithm is doomsday compatible in round $t+1$.
\end{proof}

\section*{Acknowledgments}

This work was done in part while Alexandros Psomas was visiting the Simons Institute for the Theory of Computing. Work was done in part while Alexandros Psomas was at Google Research, MTV. This work was partially supported by  NSF grant CCF-1755955.

\bibliographystyle{plainnat}
\bibliography{reference}

\appendix

\newpage
\section{Limitations without Normalization}\label{app:missing from intro}

Here, we observe that if values are not normalized, the only fair-share algorithm is equal-split. Consider any algorithm $\mathcal{A}$ that does not always split equally. Let $r$ be the first round in which there exists an agent $i$ that gets $x^{\mathcal{A}}_{ir}<\frac{1}{n}$. Since $r$ is the first such round, we have  $x^{\mathcal{A}}_{jt} = \frac{1}{n}$ for all $j \in N$ and $t < r$.
Therefore,

 \[\sum_{t = 1}^{r}v_{it}x^{\mathcal{A}}_{it} <\frac{1}{n} \sum_{t = 1}^{r}v_{it}.\]
 
 But then, if for all subsequent rounds $k>r$ all agents have zero value, i.e., $v_{ik} = 0$ for all $i\in N$ (or, alternatively, if round $r$ was the last round), algorithm $\mathcal{A}$ would fail to satisfy fair-share for agent $i$.
 
\section{Proofs missing from Section~\ref{sec: non adaptive} }\label{app: missing from non adaptive}

\subsection*{Missing from Theorem~\ref{thm: proportional algo}}

\paragraph{Analysis of $\alpha(v_1,v_2)$.}
Recall that

\begin{align*}
 \alpha(v_1,v_2) &= \frac{\frac{v_1^2+(1-v_2)^2}{v_1+1-v_2}+\frac{(1-v_1)^2+v_2^2}{v_2+1-v_1}}{ v_1 + v_2}\\
 &= \frac{2(1+2v_1v_2-v_1-v_2)}{(1-(v_1-v_2)^2)(v_1+v_2)}
\end{align*}

Taking a partial derivative with respect to $v_1$ we have:

\[
\frac{\partial}{\partial v_1} \alpha(v_1,v_2) = \frac{2 f(v_1,v_2)}{((1-(v_1-v_2)^2)(v_1+v_2))^2},
\]

where $f(v_1,v_2) =  v_1^3 (4v_2-2)+v_1^2 (3-2v_2^2-2v_2) + 2v_1 (v_2^2-v_2) - 2v_2^4+2v_2^3+v_2^2-1$. Furthermore, $\frac{\partial}{\partial v_2} \alpha(v_1,v_2) = \frac{2 f(v_2,v_1)}{((1-(v_1-v_2)^2)(v_1+v_2))^2}$. Therefore, finding all the critical points is equivalent to finding all $v_1, v_2$ such that $f(v_1,v_2) = f(v_2,v_1) = 0$. Let $v_1 = x$ and $v_2 = y$, we have\\

\includegraphics[width = \linewidth]{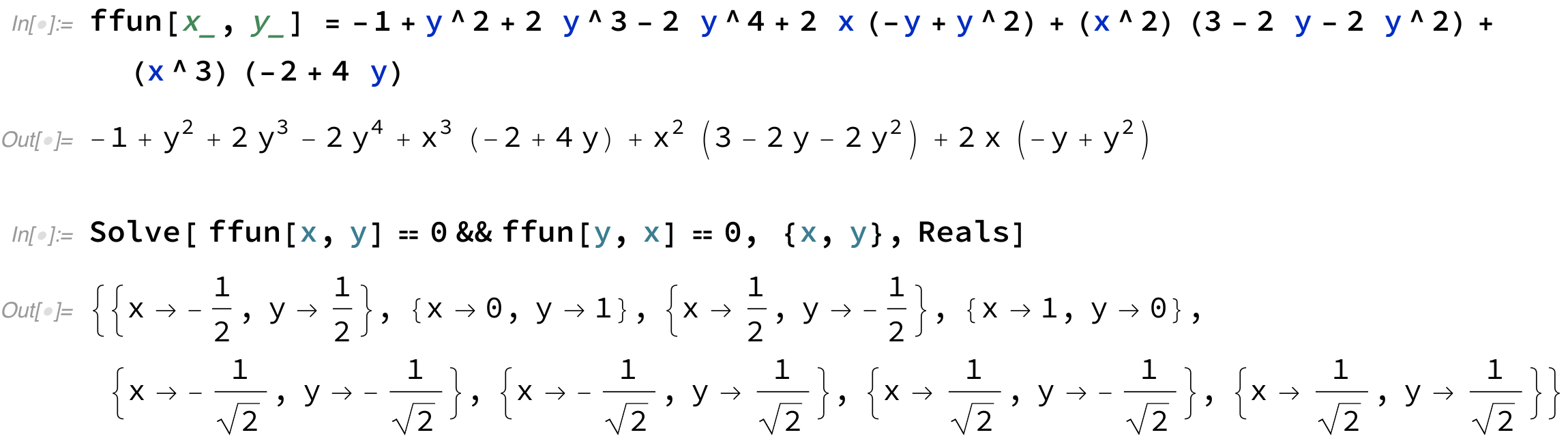}
\\

If $v_1$ or $v_2$ is negative, the corresponding solution is outside of our domain. Furthermore, both $v_1$ and $v_2$ have to be strictly positive (since this is a two item instance), thus we only need to consider one solution $v_1 = v_2 = 1/\sqrt{2}$.

Going back to $\alpha$, the worst approximation to optimal welfare is achieved for $v_1 = v_2 = 1/\sqrt{2}$, and has value
\[
\alpha\left(\frac{1}{\sqrt{2}},\frac{1}{\sqrt{2}}\right) = 2(\sqrt{2}-1) \approx 0.828 \qedhere
\]

\subsection*{Missing from the proof of Claim~\ref{claim: poly prop worst case}}

Given an encoding of an instance as $v_t = v_{1t}$ and $\lambda_t = \frac{v_{1t}}{v_{2t}}$, $ALG$, the social welfare of the quadratic-proportional algorithm can be written as
\[ALG =\sum_{t \in [T]} \frac{v_t^3+(v_t\lambda_t)^3}{v_t^2+ (v_t \lambda_t)^2} = \sum_{t \in [T]} v_t \frac{1+\lambda_t^3}{1+\lambda_t^2}. \]

Now, consider the following mathematical program:
\begin{align*}
\text{minimize}&\sum_{t \in [T]} v_t \frac{1+\lambda_t^3}{1+\lambda_t^2} \notag \\
	\text{subject to} 
	&\sum_{t \in [T]} v_t = \sum_{t \in [T]} v_t \lambda_t   \\
	&\sum_{t \in [T] :\lambda_t \leq 1} v_t + \sum_{t \in [T]: \lambda_t> 1} \lambda_t v_t =1 \\
	&v_t \geq 0,  \text{ for all } t \in [T] \notag \\
	&\lambda_t \geq 0, \text{ for all } t \in [T] \notag
\end{align*}

The objective is to minimize the approximation to the optimal welfare of the algorithm. 
Solving this program would give us the worst case approximation to optimal welfare: consider an arbitrary feasible solution $\mathbf{v} , \mathbf{\lambda}$ to this program. By dividing each agents' values (each $v_{it}$) by their common total value $\sum_{t \in [T]} v_{it}$ we get a feasible instance for the original problem. Furthermore, the approximation to optimal welfare in this instance is equal to the value of the objective: the social welfare of the quadratic-proportional algorithm and the optimal social welfare are the program's objective and $1$, divided by the normalization term $\sum_{t \in [T]} v_{it}$, respectively. Showing that an arbitrary online instance gives a feasible solution to this program with the same approximation is equally straightforward.

Second, notice that for any fixed $\mathbf{\lambda}$, the remaining program, with variables only the $v_t$s, is a linear program with $T+2$ constraints and $T$ variables. By the fundamental theorem of linear program, any minimizer occurs at the region's corner, i.e. any minimizer would have $T$ constraints tight. Since the total number of constraints is $T+2$, at least $T-2$ of the $T$ tight constraints are non-negativity constraints. One can easily observe that the case of exactly $T-1$ tight non-negativity constraints gives an approximation to optimal welfare of $1$ (since there is only one item with $v_{1t} > 0$), so the worst case approximation to optimal welfare happens when there are exactly two positive variables/items with positive value for agent $1$; without loss of generality (poly-proportional allocation algorithms are memoryless) these are the first two items.

Third, for every instance where agent $1$ values only the first two items, the approximation to optimal welfare is minimized when agent $2$ also values only the first two items. 

\subsection*{Missing from the proof of Lemma~\ref{lem: qp efficiency} }\label{app: qp efficiency }

Taking the partial derivative of $\alpha(v_1,v_2)$ with respect to $v_1$ and $v_2$ (and notice that the function is symmetric with respect to $v_1$ and $v_2$) we get that:
\begin{align*}
   \frac{\partial \alpha(v_1,v_2)}{ \partial v_1}  = \frac{f(v_1,v_2)}{(v_1^2+(1-v_2)^2)^2(v_2^2+(1-v_1)^2)^2(v_1+v_2)^2}   \\
   \frac{\partial \alpha(v_1,v_2)}{ \partial v_2} = \frac{f(v_2,v_1)}{(v_2^2+(1-v_1)^2)^2(v_1^2+(1-v_2)^2)^2(v_1+v_2)^2}
\end{align*}
where
\begin{align*}
    &f(v_1,v_2) = -v_1^8 +v_1^7(6-4v_2) +v_1^6(30v_2-18v_2^2-19)\\
    &+v_1^5(-16v_2^3+70v_2^2-82v_2+36)+v_1^4(70v_2^3-16v_2^4 \\
    &-131v_2^2+116v_2-41) +v_1^3(-4v_2^5+42v_2^4-116v_2^3 \\
    &+152v_2^2-104v_2+30) +v_1^2(2v_2^6+2v_2^5-37v_2^4+96v_2^3 \\
    &-116v_2^2+70v_2-17) +v_1(8v_2^7-22v_2^6+14v_2^5+20v_2^4 \\
    &-52v_2^3+58v_2^2-34v_2+8) +v_2^8-6v_2^7+11v_2^6-4v_2^5\\
    &-11v_2^4+18v_2^3-15v_2^2+8v_2-2.
\end{align*}
Let $ v_1=x$ and $ v_2=y$. We write $f(v_1,v_2)$ in Mathematica.
\\

\includegraphics[width=\linewidth]{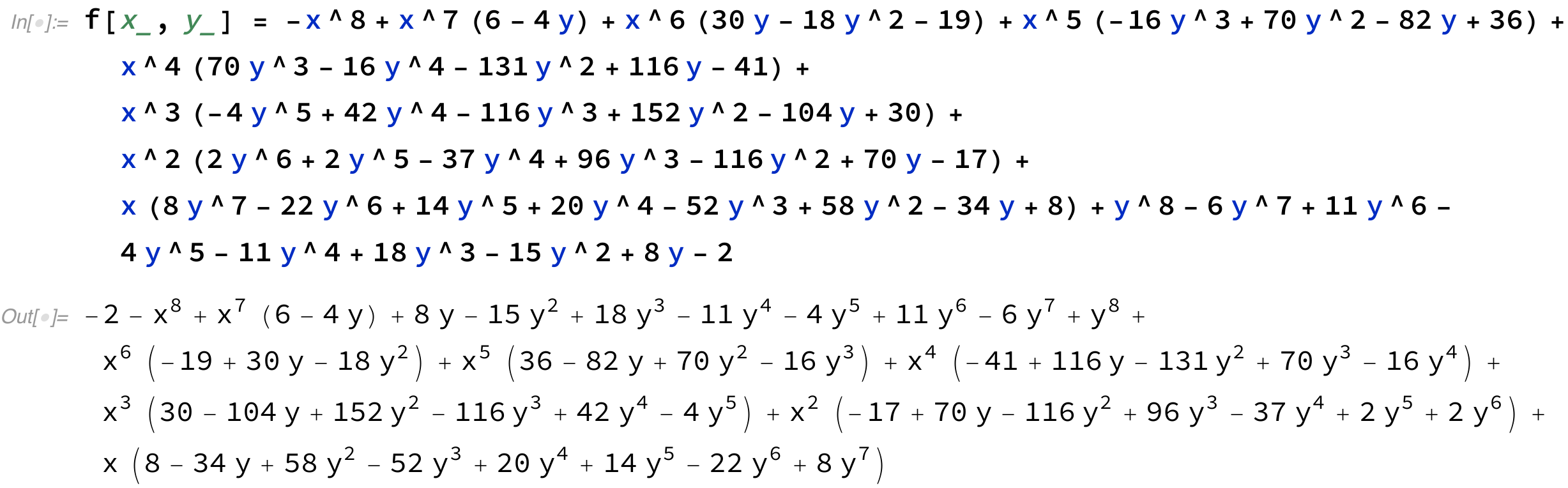}
\\

Solving $\frac{\partial \alpha(v_1,v_2)}{\partial v_1} = \frac{\partial \alpha(v_1,v_2)}{\partial v_2} = 0$ is equivalent to solving $f(v_1,v_2) = f(v_2,v_1) = 0$.
\\

\includegraphics[width=\linewidth]{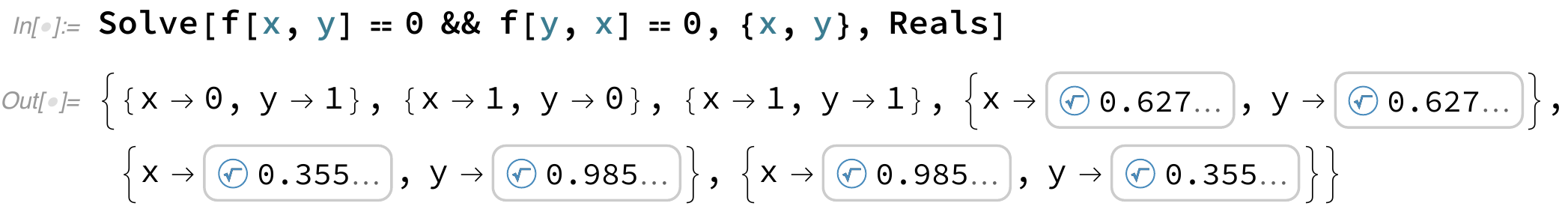}
\\

We numerically confirm that the following are the only solutions: $(0,1)$, $(0.626538,0.626538)$, $(0.35526, 0.985127)$ and $(1,1)$. Notice that $\frac{\partial \alpha(v_1,v_2)}{\partial v_1}(0,1)$ and $\frac{\partial \alpha(v_1,v_2)}{\partial v_2}(0,1)$ are not defined.

Plugging back in the definition of $\alpha$ we have
\begin{gather*}
    \alpha(0,1) = \frac{\frac{1+1}{1+1}}{1} = 1,\\
    \alpha(0.6265, 0.6265) = \frac{2\frac{0.3735^3+0.6265^3}{0.3735^2+0.6265^2}}{1.253} = 0.8941, \\
    \alpha(0.355,0.985) = \frac{\frac{0.645^3+0.985^3}{0.645^2+0.985^2}+\frac{0.015^3+0.355^3}{0.015^2+0.355^2}}{0.355+0.985} = 0.9234, \\
    \alpha(1,1) = \frac{\frac{1}{1}+\frac{1}{1}}{1+1} = 1.
\end{gather*}

We conclude that the algorithm achieves $0.89$ of optimal welfare in the worst case.

\section{Proofs missing from Section~\ref{sec: adaptive} }\label{app: missing from adaptive}

\subsection*{Critical points within rounds}

In this subsection we generalize the notion of a critical point, we discuss how guarded poly-proportional algorithms handle critical points, and why we can, without loss of generality, assume that these points arise only at the beginning (or end) of a round.

Let $\mathbf{v}$ be an instance with $T$ items such that the guarded poly-proportional algorithm with parameter $p$ reaches a critical point for, say, agent $1$ (without loss of generality) \emph{within} some round $c$. This would mean that at the beginning of this round a critical point has not yet been reached, yet at the end of this round it is already too late. Formally,

\[
\sum_{t = 1}^{c-1} v_{1t} x_{1t} + \sum_{t = c}^T v_{1t} ~>~ \frac{1}{2} ~>~ \sum_{t = 1}^{c} v_{1t} x_{1t} + \sum_{t = c+1}^T v_{1t},
\]

where $x_{1t}$ is the fraction of item $t$ allocated using the poly-proportional algorithm with parameter $p$. In this case, we simulate the algorithm as a continuous process, so the critical point is reached after a fraction $f$ of the item of round $c$ has been allocated. Specifically, this is the point when the utility of the agent up to that point, i.e., $\sum_{t=1}^{c-1} v_{1t} x_{1t} +f \frac{(v_{1c})^{p+1}}{(v_{1c})^{p}+(v_{2c})^{p}}$, added to the agent's remaining value, i.e., $(1-f) v_{1c} + \sum_{t=c+1}^T v_{1t}$, adds up to exactly $1/2$. When this point is reached, the algorithm allocates all of the remaining $1-f$ fraction of the item, as well as all subsequent items, to agent 1.

Now, to verify that we can assume this point is always reached at the end of a round, we construct an alternative instance with $T+1$ (instead of $T$) rounds, such that the critical point is reached at the end of round $c$ and the outcome of the algorithm is exactly the same. In fact, all that we need to do is just replace the item of round $c$ with two items of value $(f v_{1c}, f v_{2c})$ for the first one and $((1-f) v_{1c}, (1-f) v_{2c})$ for the second; all other values remain the same as the original instance. The modified instance remains valid, since the agents' values still add up to one and it is easy to observe that the outcome of the algorithm would be the same, while the critical point is reached at the end of round $c$.

\subsection*{Missing from the proof of Lemma~\ref{lem: guarded is ef} }

\begin{proof}[Proof of Lemma~\ref{lem: guarded is ef} continued]

Here, we do the case analysis.

\paragraph{Case 1.} There is exactly one positive variable, say $v_1$, and the second and third constraints are not tight. That is, we have $v_1 < 1$ and $v_1 \lambda_1 < 1$. Since $v_1 < 1$, then the first constraint implies that $\frac{\lambda_1^p}{1+\lambda_1^p} > \frac{1}{2}$, which gives $\lambda_1 > 1$. Thus
\[
\frac{1}{2} =^{\text{(First Const.)}} v_1 \frac{\lambda_1^p}{1+\lambda_1^p} < v_1 \frac{\lambda_1^{p+1}}{1+\lambda_1^p} = \text{Objective}.
\]

\paragraph{Case 2.1.} There are exactly two positive variables, $v_1$ and $v_2$, and the second constraint is tight (i.e. $v_1 + v_2 = 1$), while the third constraint is not tight (i.e. $v_1 \lambda_1 + v_2 \lambda_2 < 1$).

First, we observe that tightness of the second constraint implies that it can't be that both $\lambda_1$ and $\lambda_2$ are strictly bigger than $1$, otherwise we would have $v_1\lambda_1 + v_2\lambda_2 >1$. Furthermore, by the first constraint we cannot have both $\lambda_1$ and $\lambda_2$ strictly less than $1$. 
Thus, assume without loss of generality that $\lambda_1  >1$ and $\lambda_2 <1$. $v_1 + v_2 = 1$ and $v_1 \frac{\lambda_1^p}{1 + \lambda_1^p} + v_2 \frac{\lambda_2^p}{1 + \lambda_2^p} = 1/2$ imply that
\[
v_1 = \left( \frac{1}{2}-\frac{\lambda_2^p}{1+\lambda_2^p} \right) /\left( \frac{\lambda_1^p}{1+\lambda_1^p}-\frac{\lambda_2^p}{1+\lambda_2^p} \right).
\]

\noindent Re-writing the objective we have that the utility of agent $2$ is
\begin{align*}
	    u_2 &= v_1\frac{\lambda_1^{p+1}}{1+\lambda_1^p} + v_2\frac{\lambda_2^{p+1}}{1+\lambda_2^p} \\
	    &= \frac{\lambda_2^{p+1}}{1+\lambda_2^p}+v_1\left(\frac{\lambda_1^{p+1}}{1+\lambda_1^p}-\frac{\lambda_2^{p+1}}{1+\lambda_2^p}\right)\\
	    & = \frac{\lambda_2^{p+1}}{1+\lambda_2^p}+\frac{\frac{1}{2}-\frac{\lambda_2^p}{1+\lambda_2^p}}{\frac{\lambda_1^p}{1+\lambda_1^p}-\frac{\lambda_2^p}{1+\lambda_2^p}}\left(\frac{\lambda_1^{p+1}}{1+\lambda_1^p}-\frac{\lambda_2^{p+1}}{1+\lambda_2^p}\right)\\
	    &= \frac{\lambda_1^{p+1}-\lambda_2^{p+1}+(\lambda_1\lambda_2)^p(\lambda_2-\lambda_1)}{2(\lambda_1^p-\lambda_2^p)}
	\end{align*}

Taking the partial derivative with respect to both $\lambda_1$ we have:
\[\frac{\partial u_2}{\partial \lambda_1} = \frac{f(\lambda_1,\lambda_2)(\lambda_1^p-(\lambda_1\lambda_2)^p)}{2\lambda_1(\lambda_1^p-\lambda_2^p)^2},
\]
where $f(\lambda_1,\lambda_2) = \lambda_1^{p+1}-(p+1)\lambda_1\lambda_2^p+p\lambda_2^{p+1}$. 
Notice that $\frac{\partial f}{\partial \lambda_1} = (p+1)\lambda_1^p-(p+1)\lambda_2^p > 0$, for all $\lambda_1 > \lambda_2$. Therefore $f(\lambda_1,\lambda_2)> f(\lambda_2,\lambda_2) = 0$. Consequently, we have $\frac{\partial u_2}{\partial \lambda_1} >0$ for all $\lambda_1 >1$. In other words, the utility of agent $2$ is lower bounded by $u_2(1,\lambda_2)$, or
\[
u_2(1,\lambda_2) = \frac{1-\lambda_2^{p+1}+\lambda_2^p(\lambda_2-1)}{2(1-\lambda_2^p)} = \frac{1}{2}.
\]

\paragraph{Case 2.2.} 

There are exactly two positive variables, $v_1$ and $v_2$, and the third constraint is tight (i.e. $v_1\lambda_1 + v_2\lambda_2 = 1$), while the second constraint is not tight (i.e. $v_1 + v_2 < 1$).

It's easy to see that we can't have $\lambda_1, \lambda_2 < 1$. If $\lambda_1, \lambda_2 > 1$ we have
\[
v_1 \frac{\lambda_1^{p+1}}{1+\lambda_1^p}+ v_2 \frac{\lambda_2^{p+1}}{1+\lambda_2^p}> v_1 \frac{\lambda_1^{p}}{1+\lambda_1^p}+v_2 \frac{\lambda_2^{p}}{1+\lambda_2^p} = \frac{1}{2},
\]
which is a contradiction (because the first constraint is tight).
Therefore, $\lambda_1 < 1$ and $\lambda_2 > 1$ without loss of generality.

We have
\begin{align*}
    \frac{1}{2} &=^{\text{(First constr.)}} v_1 \frac{\lambda_1^{p}}{1+\lambda_1^p}+v_2 \frac{\lambda_2^{p}}{1+\lambda_2^p} \\
    &=^{\text{(Third constr.)}} v_1 \frac{\lambda_1^{p}}{1+\lambda_1^p}+ \frac{1-v_1\lambda_1}{\lambda_2} \frac{\lambda_2^{p}}{1+\lambda_2^p},
\end{align*}

which implies that
\[
v_1 =\frac{\left(\lambda_2^p+1-2\lambda_2^{p-1}\right)\left(\lambda_1^p+1\right)}{2\left(\lambda_1^p\left(\lambda_2^p+1\right)-\lambda_1\lambda_2^{p-1}\left(\lambda_1^p+1\right)\right)}.
\]

Plugging into the third constraint and re-arranging we get
\begin{align*}
v_2 = \frac{ 2\lambda_1^p\lambda_2^p + 2\lambda_1^p - \lambda_1^{p+1} \lambda_2^p  - \lambda_1^{p+1}  -
\lambda_1 \lambda_2^p  - \lambda_1  }{2\lambda_2 \left(\lambda_1^p\left(\lambda_2^p+1\right)-\lambda_1\lambda_2^{p-1}\left(\lambda_1^p+1\right)\right)}.
\end{align*}

Re-writing the objective we have
\begin{align*}
    &u_2 = v_1\frac{\lambda_1^{p+1}}{1+\lambda_1^p} + v_2\frac{\lambda_2^{p+1}}{1+\lambda_2^p} \\
    &= \frac{ \lambda_1^{p+1}\lambda_2^p+ \lambda_1^{p+1} - 2\lambda_1^{p+1} \lambda_2^{p-1}}{2\left(\lambda_1^p\left(\lambda_2^p+1\right)-\lambda_1\lambda_2^{p-1}\left(\lambda_1^p+1\right)\right)} \\
    &+ \frac{ 2\lambda_1^p\lambda_2^p + 2\lambda_1^p - \lambda_1^{p+1} \lambda_2^p  - \lambda_1^{p+1}  -
\lambda_1 \lambda_2^p  - \lambda_1  }{2 \left(\lambda_1^p\left(\lambda_2^p+1\right)-\lambda_1\lambda_2^{p-1}\left(\lambda_1^p+1\right)\right)} \frac{\lambda_2^{p}}{1+\lambda_2^p} \\
&= \frac{ \lambda_1^{p+1}\lambda_2^p+ \lambda_1^{p+1} - 2\lambda_1^{p+1} \lambda_2^{p-1} - 2\lambda_1^{p+1} \lambda_2^{2p-1}
}{2\left(\lambda_1^p\left(\lambda_2^p+1\right)-\lambda_1\lambda_2^{p-1}\left(\lambda_1^p+1\right)\right)}\frac{1}{1+\lambda_2^p} \\
    &+ \frac{ 2\lambda_1^p\lambda_2^{2p} + 2\lambda_1^p\lambda_2^{p}  -
\lambda_1 \lambda_2^{2p}  - \lambda_1\lambda_2^{p}  }{2 \left(\lambda_1^p\left(\lambda_2^p+1\right)-\lambda_1\lambda_2^{p-1}\left(\lambda_1^p+1\right)\right)} \frac{1}{1+\lambda_2^p} \\
&=  \frac{ \lambda_1^{p + 1} (1 - 2 \lambda_2^{p-1}) + (2 \lambda_1^p - \lambda_1) \lambda_2^{p}
}{2\left(\lambda_1^p\left(\lambda_2^p+1\right)-\lambda_1\lambda_2^{p-1}\left(\lambda_1^p+1\right)\right)}
\end{align*}

Taking the partial derivative with respect to $\lambda_2$ we have
\[\frac{\partial u_2}{\partial \lambda_2} =  \frac{ \lambda_2^p g(\lambda_1,p) f(\lambda_2,\lambda_1)}{2(-\lambda_1\lambda_2^{p-1}(\lambda_1^p+1)+\lambda_1^p(\lambda_2^p+1))^2}\]
where
\begin{align*}
    g(\lambda_1,p) &=\lambda_1-2\lambda_1^p+\lambda_1^{p+1}, \text{ and }\\
    f(\lambda_2,\lambda_1) &= (p-1)\lambda_1^{p+1}+\lambda_1\lambda_2^p -p\lambda_1^p\lambda_2.
\end{align*}

Notice that $\frac{\partial f}{\partial \lambda_2} = p\lambda_1 - p\lambda_1^{p}$, which is positive since $\lambda_1<1$. Thus, $f(\lambda_2,\lambda_1) \geq f(0,\lambda_1) = \lambda_1^{p+1}(p-1)>0$.
For $g(\lambda_1,p)$, we take the partial derivative with respect to $p$: 
\[
\frac{\partial g}{\partial p} = \lambda_1^{p-1}\ln{\lambda_1}(\lambda_1-2),
\] 
which is non-negative for all $\lambda_1\leq 1$. Therefore $g(\lambda_1,p) \geq g(\lambda_1,0) = 2\lambda_1 - 2 > 0$.

Since both $g(\lambda_1,p)$ and $f(\lambda_2,\lambda_1)$ are strictly positive, we have $\frac{\partial u_2}{\partial \lambda_2}>0$. Therefore, the utility of agent $2$ is at least $u_2(\lambda_1,0)$. Plugging $\lambda_2 = 0$ in the definition of $u_2$ we have
\[
\frac{ -\lambda_1^{p + 1}  + 2 \lambda_1^p - \lambda_1
}{2\left(2 \lambda_1^p-\lambda_1\left(\lambda_1^p+1\right)\right)} = \frac{ -\lambda_1^{p}  + 2 \lambda_1^{p-1} - 1
}{2\left(2 \lambda_1^{p-1}-\lambda_1^p-1 \right)} = \frac{1}{2}.
\]

\paragraph{Case 3.}
There are exactly three positive variables, $v_1$, $v_2$ and $v_3$, and both the second and third constraints are tight (i.e. $v_1 + v_2 + v_3 = 1$ and $v_1 \lambda_1 + v_2 \lambda_2 + v_3 \lambda_3 = 1$). Since both agents have seen a value of $1$, and the first agent to reach a critical point was agent $1$, then by definition the utility of agent $2$ is at least $1/2$.

This concludes the proof of Lemma~\ref{lem: guarded is ef}.
\end{proof}
\subsection*{Instances of Figure~\ref{fig:cpplot}}
In the table below we present some of the instances that we used in the plot of Figure~\ref{fig:cpplot}. For all the instances where a critical point (CP) is not reached, we have two rounds, with $v_{11} = v_{22}$ and $v_{12} = 1-v_{11} = v_{21}$,therefore knowing $v_{11}$ is sufficient for constructing the whole instance. On the other hand, for the instances where a critical point is reached, we have a three rounds instance where $v_{12} = 1-v_{11}$, $v_{23} = 1-v_{21}$, and $v_{22} = v_{13}=\epsilon$. where $\epsilon$ is an arbitrarily small positive number. In this case, it is sufficient to know $v_{11}$ and $v_{21}$ to construct the whole instance.

In the table below, for each value of $p$ from the plot of Figure~\ref{fig:cpplot} we provide the instance that gives the lower of the two approximation upper bounds. For the case of $p=2.7$ we include both instances. And with rounding the two instances (approximately) coincide.\\

\begin{center}

\begin{tabular}{ c c c c }
  p value & with CP & Instance & Approx \\ 
  \hline
  $2$ & No & $v_{11}=v_{22} = 0.626$ & $0.894$ \\ 
  $2.1$ & No & $v_{11}=v_{22} = 0.621$ & $0.898$ \\
  $2.2$ & No & $v_{11}=v_{22} = 0.617$ & $0.902$\\
  $2.3$ &No & $v_{11}=v_{22} =0.613$ & $0.905$\\
  $2.4$ & No & $v_{11}=v_{22} = 0.609$ & $0.908$\\
  $2.5$ & No & $v_{11}=v_{22} = 0.606$ & $0.911$\\
  $2.6$ & No & $v_{11}=v_{22} = 0.602$ & $0.914$\\
  $\mathbf{2.7}$ & No &$v_{11} = v_{22}=0.599$ & $\mathbf{0.916}$\\
  $\mathbf{2.7}$ & Yes &$v_{11} =0.76$, $v_{22}=0.97$ & $\mathbf{0.916}$\\
  $2.8$ & Yes & $v_{11} = 0.75$, $ v_{21} = 0.96$& $0.912$\\
  $2.9$ & Yes &$v_{11} = 0.74$, $v_{21} = 0.95$ & $0.908$ \\
  $3.0$ & Yes & $v_{11} = 0.73$, $v_{21} = 0.94$ & $0.904$ \\
\end{tabular}

\end{center}

\subsection*{Missing from the proof of Theorem~\ref{thm: guarded poly approx}}

\begin{proof}[Proof of Theorem~\ref{thm: guarded poly approx} continued]

We continue the case analysis here.

\paragraph{Case 1.}
One item before, and two items after the critical point. Without loss of generality the first item is $1$ and the other two are $2$ and $3$. For convenience, normalize all values to add up to $1$.

Since $c$ is a critical point, the utility of agent $1$ for agent $2$'s allocation is $1/2$, or
\[
v_1 \frac{\lambda_1^p}{1 + \lambda_1^p} = \frac{1}{2} \implies v_1 = \frac{1 + \lambda_1^p}{2 \lambda_1^p}.
\]
The above equation further implies that, since $v_1 \leq 1$, then $\lambda_1 \geq 1$. This implies that in the optimal solution the first item goes to agent $2$, so the maximum possible welfare is $2-v_1$ (and this is tight, by setting $\lambda_1 = \frac{1}{v_1}$ and $v_2 = 1-v_1$).

Since agent $1$ gets utility $1/2$, the objective is
\[
\frac{1}{2} + v_1 \frac{\lambda_1^{p+1}}{1+\lambda_1^p} = \frac{1}{2} + \frac{1 + \lambda_1^p}{2 \lambda_1^p}\frac{\lambda_1^{p+1}}{1+\lambda_1^p} = \frac{1+\lambda_1}{2}.
\]
Therefore we have:
\[
\alpha = \frac{ALG}{OPT} = \frac{1+\lambda_1}{2(2-v_1)} = \frac{1+\lambda_1}{4-\frac{1+\lambda_1^{p}}{\lambda_1^{p}}} = \frac{\lambda_1^{p}+\lambda_1^{p+1}}{3\lambda_1^{p}-1}. 
\]

Taking the derivative we have
\[
\frac{\partial \alpha}{\partial \lambda_1} =
\frac{\lambda_1^{p-1}( 3\lambda_1^{p+1} - (1+p)\lambda_1  - p  ) }{(3\lambda_1^p-1)^2}.
\]

Solving $\frac{d \alpha}{d \lambda_1} = 0$ for $p=2.7$ we get that $\lambda_1 = 1.27764$; plugging it back in we have
\\

\includegraphics[width=0.9\linewidth]{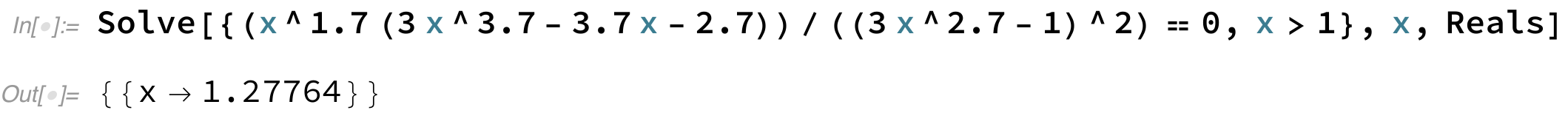}

\[\alpha(1.27764) > 0.916. \]
\paragraph{Case 2.} 
Two items before, and one item after the critical point. Without loss of generality the items are $v_1, v_2, v_3 >0$ and $c=2$. First notice that if cannot be that both $v_3 > 0$ and $\lambda_3 > 0$: otherwise we can construct a worse instance by splitting the third item into two rounds, where each agent wants a different item, thereby making the optimal welfare larger, but keeping the algorithm's welfare the same.
Furthermore, the case that agent $2$ has positive value for the item, but agent $1$ has zero value for the item is impossible (since $v_3 > 0$ by the original argument). Therefore, $\lambda_3 = 0$.

Since agent $2$ has seen all her value we get
\[
v_1 \lambda_1 + v_2 \lambda_2 = 1 \implies v_2 = \frac{1 - v_1 \lambda_1}{\lambda_2}.
\]
Since agent $1$ has value $1/2$ for agent $2$'s allocation we have
\[
v_1 \frac{\lambda_1^{p}}{1+\lambda_1^{p}}+\frac{1-v_1\lambda_1}{\lambda_2} \frac{\lambda_2^{p}}{1+\lambda_2^{p}} = \frac{1}{2}
\]
which implies
\begin{gather*}
   v_1 = \frac{(1+\lambda_2^p - 2\lambda_2^{p-1})(1+\lambda_1^p)}{2 \left( \lambda_1^p (1+\lambda_2^p) - \lambda_1 \lambda_2^{p-1} (1+\lambda_1^p)  \right) } \text{ , and } \\
   v_2 = \frac{1}{\lambda_2} - \frac{ \lambda_1 (1+\lambda_2^p - 2\lambda_2^{p-1})(1+\lambda_1^p)}{2 \lambda_2 \left( \lambda_1^p (1+\lambda_2^p) - \lambda_1 \lambda_2^{p-1} (1+\lambda_1^p)  \right) }.
\end{gather*}
The algorithm's welfare is
\[
ALG = \frac{1}{2} + v_1 \frac{\lambda_1^{p+1}}{1+\lambda_1^p} + v_2\frac{\lambda_2^{p+1}}{1+\lambda_2^p}.
\]
We can break cases based on the value of $\lambda_1$ and $\lambda_2$. They cannot both be strictly smaller than $1$ (otherwise we can't have both $v_1\lambda_1 + v_2\lambda_2 = 1$ and $v_1+v_2+v_3=1$). The case that both equal to $1$ is trivial (identical agents), and the $\lambda_1 > 1, \lambda_2 < 1$ and $\lambda_1 < 1, \lambda_2 > 1$ cases are symmetric, so we only need to consider one of them. First, without loss of generality assume that $\lambda_1 > 1, \lambda_2 < 1$. For convenience we normalize all values to add up to $1$.
The optimal welfare is $OPT = 2 - v_1 - v_2\lambda_2$. We can therefore write
\[
\alpha(\lambda_1,\lambda_2) = \frac{ALG}{OPT} = \frac{\frac{1}{2} + v_1 \frac{\lambda_1^{p+1}}{1+\lambda_1^p} + v_2\frac{\lambda_2^{p+1}}{1+\lambda_2^p}}{2 - v_1 - v_2\lambda_2}, 
\]

Let $\lambda_1 = x$, $\lambda_2 = y$, and plug in the closed forms for $v_1$ and $v_2$. \\

\includegraphics[width=0.9\linewidth]{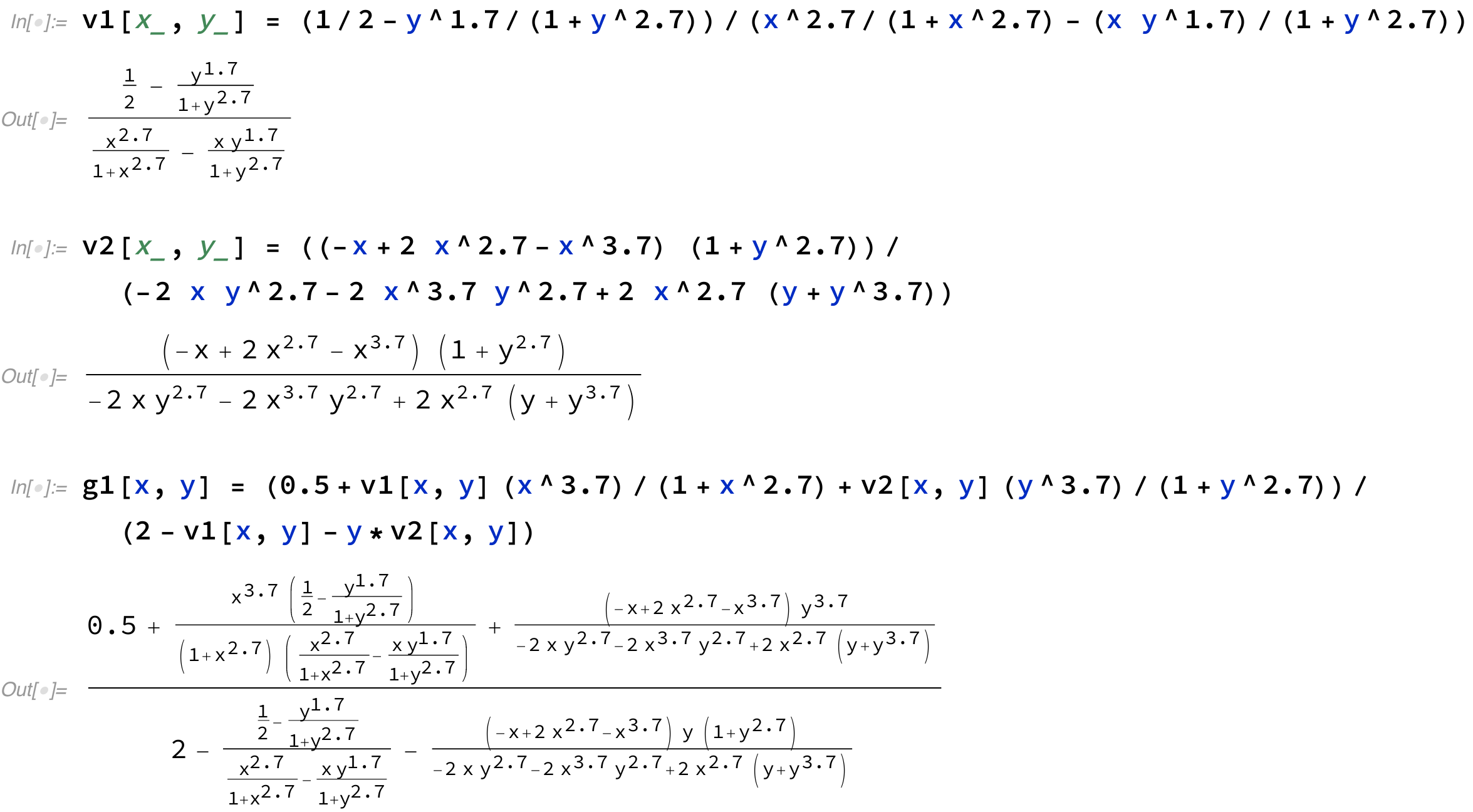}\\

Minimizing $\alpha$ in the feasible region we have\\

\includegraphics[width=\linewidth]{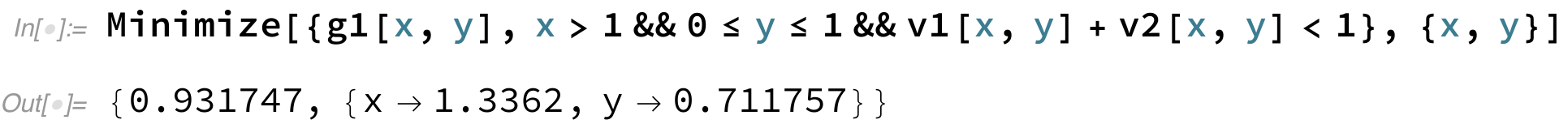}\\
\[
\alpha(1.3362,0.711757) > 0.93.
\]

Now consider the case where $\lambda_1$, $\lambda_2>1$. The optimal welfare is now $OPT = 2-v_1-v_2$, therefore

\[\alpha(\lambda_1,\lambda_2) = \frac{ALG}{OPT} = \frac{\frac{1}{2} + v_1 \frac{\lambda_1^{p+1}}{1+\lambda_1^p} + v_2\frac{\lambda_2^{p+1}}{1+\lambda_2^p}}{2 - v_1 - v_2}.\]

Using Mathematica we have \\

\includegraphics[width=0.9\linewidth]{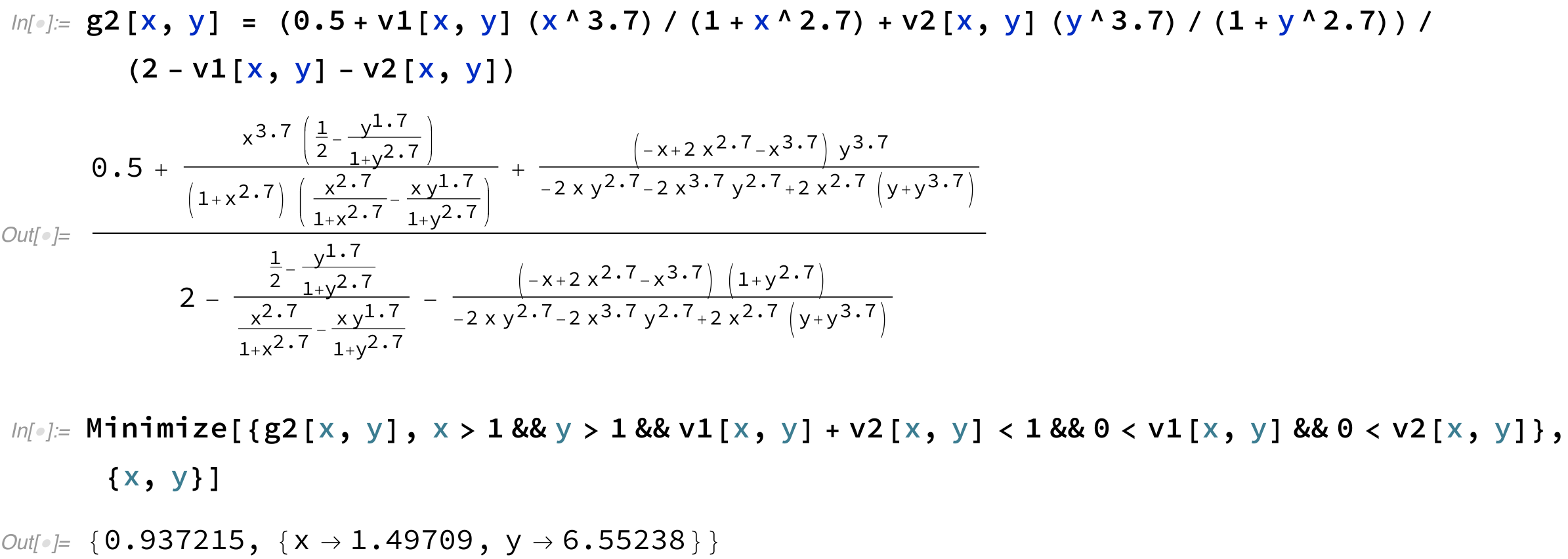}

\[\alpha(1.49709, 6.55238) > 0.93\]
\paragraph{Case 3.} 
Three items before $c$, i.e. $c = 3$, and all items after the critical point have zero value for both agents. Then, by definition, there is no critical point. From the analysis of the (un-guarded) poly-proportional algorithm (Claim~\ref{claim: poly prop worst case}), we already know that the worst approximation is achieved by a two item instance.

Consider the instance (in standard notation)
\begin{center}
\begin{tabular}{ c c } 
  Agent one & Agent two \\ 
  \hline
  $1-v_1$ & $v_2$ \\ 
  $v_1$ & $1-v_2$ \
\end{tabular}
\end{center}

Without loss of generality assume that $v_2 > 1-v_1$ and $v_1 > 1-v_2$. The optimal welfare is $OPT = v_1 + v_2$.

Now consider the the performance of algorithm:
\[
ALG = \frac{(1-v_1)^{p+1}+v_2^{p+1}}{(1-v_1)^{p}+v_2^{p}} + \frac{(1-v_2)^{p+1}+v_1^{p+1}}{(1-v_2)^{p}+v_1^{p}}.
\]
The approximation to optimal welfare is
\[\alpha = \frac{ALG}{OPT} = \frac{\frac{(1-v_1)^{p+1}+v_2^{p+1}}{(1-v_1)^{p}+v_2^{p}} + \frac{(1-v_2)^{p+1}+v_1^{p+1}}{(1-v_2)^{p}+v_1^{p}}}{v_1 + v_2}.\]

Let $v_1=x$ and $v_2=y$. Using Mathematica in order to minimize $\alpha$ in the feasible region we have\\

\includegraphics[width=\linewidth]{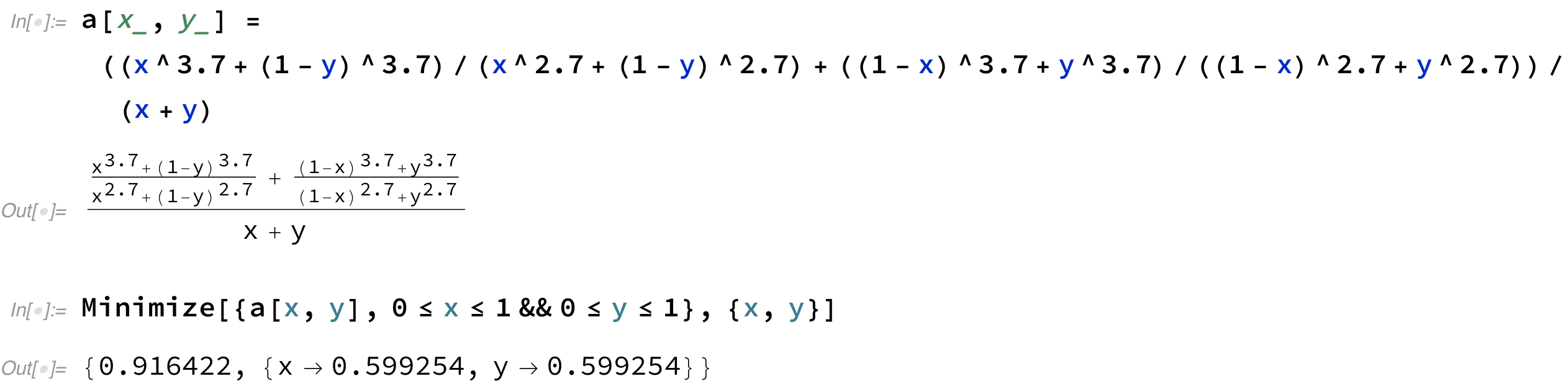}

\begin{align*}
\alpha(0.599,0.599) &=\frac{\frac{0.401^{3.7}+0.599^{3.7}}{(0.401)^{2.7}+0.599^{2.7}} + \frac{0.401^{3.7}+0.599^{3.7}}{0.401^{2.7}+0.599^{2.7}}}{1.198}\\
&=0.9164 > 91.6.
\end{align*}
This concludes the proof of Theorem~\ref{thm: guarded poly approx}.
\end{proof}

\end{document}